\theoremstyle{plain}
\newtheorem{theorem}{Theorem}
\newtheorem{lemma}[theorem]{Lemma}
\newtheorem{definition}[theorem]{Definition}
\newtheorem{example}[theorem]{Example}
\theoremstyle{remark}
\numberwithin{equation}{section}
\numberwithin{theorem}{section}
\newcommand{\be}%
  {\protect\setcounter{equation}{\value{subsubsection}}}  
\newcommand{\ee}%
  {\protect\setcounter{subsubsection}{\value{equation}}}
\newcommand{\Z}{\mathbb{Z}}
\newcommand{\xn}{x^n - 1}
\begin{document}


\title {On cyclic codes over the ring $ \Z_p + u\Z_p +\cdots + u^{k-1}\Z_p $}
\author{Abhay Kumar Singh and Pramod Kumar Kewat} 
\address{Department of Applied Mathematics\\ 
         Indian School of Mines\\
         Dhanbad 826 004,  India}
\email{singh.ak.am@ismdhanbad.ac.in\\ kewat.pk.am@ismdhanbad.ac.in}

\keywords{Cyclic codes}
\begin{abstract}
 In this paper, we study cyclic codes over the ring $ \Z_p + u\Z_p +\cdots + u^{k-1}\Z_p $, where $u^k =0$. We find a set of generator for these codes. We also study the rank, the dual and the Hamming distance of these codes. 
\end{abstract}

\maketitle


\markboth{A.K. Singh and P.K. Kewat}{On cyclic codes over the ring $R_k$}

\section{Introduction}
Let $R$ be a ring. A linear code of length $n$ over $ R$ is a $R$ submodule of $R^{n}$. A linear code $C$ of length $n$ over $R$ is cyclic if $ (c_{n-1}, c_1, \dots, c_0) \in C$ whenever $ (c_0, c_1, \dots, c_{n-1}) \in C$. We can consider a cyclic code $C$ of length $n$ over $R$ as an ideal in $R[x]/<x^n - 1>$ via the following correspondence
$$ R^{n} \longrightarrow R[x]/<x^n - 1>, ~~ (c_0, c_1, \dots, c_{n-1})\mapsto c_0 + c_1x + \cdots + c_{n-1}x^{n-1}.$$
In recent time, cyclic codes over rings have been studied extensively because of their important role in algebraic coding theory. The structure of cyclic codes of odd length over rings has been discussed in a series of papers \cite{Bonn-Udaya99, Cal-Slo98, Dou-Shiro01, Lint91}. In \cite{Cal-Slo95}, \cite{Con-Slo93} and \cite{Ples-Qian96}, a complete structure of cyclic codes of odd length over $\Z_4$ has been presented. In \cite{Blac03}, Blackford studied cyclic codes of length $ n = 2k$, when $k$ is odd. The cyclic codes of length a power of $2$ over $ \Z_4$ are studied in \cite{Tah-Oeh04, Tah-Oeh03}. Bonnecaze and Udaya in \cite{Bonn-Udaya99} studied cyclic codes of odd length over $ R_2 = \Z_2 + u \Z_2, u^2 = 0$. In \cite{Tah-Siap07}, Abualrub and Siap studied cyclic codes of an arbitrary length over $ R_2 = \Z_2 + u \Z_2, u^2 = 0$ and over $ R_3 = \Z_2 + u \Z_2 + u^2 \Z_2, u^3 = 0$. Al-Ashker and Hamoudeh in \cite{Ash-Ham11} extended some of the results in \cite{Tah-Siap07} to the ring $R_k = \Z_2 + u\Z_2 + \cdots + u^{k-1}\Z_2$, $u^k = 0.$\\
\indent
Let $R_k = \Z_p + u\Z_p + \cdots + u^{k-1}\Z_p$ where $p$ is a prime number and $u^k = 0.$ In this paper, we discuss the structure of cyclic codes of arbitrary length over the ring $R_k$. We find a set of generators and a minimal spanning set for these codes. We also discuss about the rank and the Hamming distance of these codes. Recall that the Hamming weight of a codeword $c$ is defined as the number of non-zero enteries of $c$ and the Hamming distance of a code $C$ is the smallest possible weight among all its non zero codewords.\\
\indent
Let $C$ be a cyclic code over the ring $R_k = \Z_p + u\Z_p + \cdots + u^{k-1}\Z_p$ where $u^k = 0.$ The line of arguments we have used to find a set of genertors and a minimal spanning set of a code $C$ are somewhat similar to those discussed in \cite{Tah-Siap07, Ash-Ham11}. Note that some slight modification needed in our case in order to find a set of genrator, e.g., the proofs of Lemmas \ref{lemma-c3-main} and \ref{lemma-c4-main} are slightly different from those discussed in \cite{Tah-Siap07, Ash-Ham11} where the proof is not very clear. Again, the line of arguments we have used to find minimum distance are similar to \cite{Tah-Siap07} but slightly different.\\
\indent
The paper is organized as follows. In Section 2, we give a set of generators for the cyclic codes $C$ over the ring $R_k = \Z_p + u\Z_p + \cdots + u^{k-1}\Z_p$ where $u^k = 0.$ In Section 3, we find minimal spanning sets for these codes and dicuss about the rank. In Section 4, we find the minimum distance of these codes. In Section 5, we discuss some of the examples of these codes.
\section{A generator for cyclic codes over the ring $R_k$}
Let $R_k = \Z_p + u_k\Z_p + \cdots + u_{k}^{k-1}\Z_p,~ u_{k}^k = 0.$ A cyclic code $C$ of length $n$ over $R_k$ can be considered as an ideal in the $R_{k,n} = R_k[x]/<x^n - 1>$. Let $C_k$ be a cyclic code of length $n$ over $R_k$. We also consider $C_k$ as an ideal in $R_{k,n}$. We define the map $\psi_{k-1} : R_k \longrightarrow R_{k-1}$ by $\psi_{k-1}( a_0+u_ka_1+ \cdots +u_{k}^{k-1}a_{k-1}) = a_0+u_{k-1}a_1+ \cdots +u_{k-1}^{k-2}a_{k-2}$, where $a_i \in \Z_p$. The map $\psi_{k-1}$ is a ring homomorphism. We extend it to a homomorphism $\phi_{k-1} : C_k \longrightarrow R_{k-1,n}$ defined by
$$ \phi_{k-1}(c_0 + c_1x + \cdots + c_{n-1}x^{n-1}) = \psi_{k-1}(c_0) + \psi_{k-1}(c_1)x + \cdots + \psi_{k-1}(c_{n-1})x^{n-1}.$$

Let $J_{k-1} = \{r(x) \in \Z_p[x] : u_{k}^{k-1}r(x) \in \text{ker} \phi_{k-1}\}.$ It is easy to see that $J_{k-1}$ is an ideal in $R_{1,n}$. Since $R_{1,n}$ is a principal ideal ring, we have $J_{k-1} = <a_{k-1}(x)>$ and $\text{ker}\phi_{k-1} = < u_{k}^{k-1}a_{k-1}(x)>$ with $a_{k-1}(x)|(x^n-1)$ mod $p$. 

Let $C_{k-2}$ be a cyclic code of length $n$ over $R_{k-2}$. We define the map $\psi_{k-2} : R_{k-1} \longrightarrow R_{k-2}$ by $\psi_{k-2}( a_0+u_{k-1}a_1+ \cdots +u_{k-1}^{k-2}a_{k-2}) = a_0+u_{k-2}a_1+ \cdots +u_{k-2}^{k-3}a_{k-3}$, where $a_i \in \Z_p$. The map $\psi_{k-2}$ is a ring homomorphism. We extend it to a homomorphism $\phi_{k-2} : C_{k-1} \longrightarrow R_{k-2,n}$ defined by
$$ \phi_{k-2}(c_0 + c_1x + \cdots + c_{n-1}x^{n-1}) = \psi_{k-2}(c_0) + \psi_{k-2}(c_1)x + \cdots + \psi_{k-2}(c_{n-1})x^{n-1}.$$ 

Let $J_{k-2} = \{r(x) \in \Z_p[x] : u_{k-1}^{k-2}r(x) \in \text{ker} \phi_{k-2}\}.$ We see that $J_{k-2}$ is an ideal in $R_{1,n}$. As above, we have $J_{k-2} = <a_{k-2}(x)>$ and $\text{ker}\phi_{k-2} = < u_{k-1}^{k-2}a_{k-2}(x)>$ with $a_{k-2}(x)|(x^n-1)$ mod $p$.

We continue in the same way as above and define $\psi_{k-3}$, $\psi_{k-4}, \cdots, \psi_2$ and $\phi_{k-3}, \phi_{k-4}, \cdots, \phi_2$. We define $\psi_1 : R_{2} \longrightarrow R_1 = \Z_p $ by $\psi_{k}(a_0 + u_2a_1) = a_{0}.$ The map $\psi_1$ is a ring homomorphism. We extend $\psi_1$ to a homomorphism $\phi_1 : C_2 \longrightarrow R_{1,n}$ defined by
$$ \phi_1(c_0 + c_1x + \cdots + c_{n-1}x^{n-1}) = \psi_1(c_0) + \psi_1(c_1)x + \cdots + \psi_1(c_{n-1})x^{n-1}.$$ 

As above, we have $\text{ker} \phi_1 = <u_2 a_1(x)>$ with $a_1(x)|(x^n - 1)$ mod $p$. The image of $\phi_1$ is an ideal in $R_{1,n}$ and hence a cyclic code in $\Z_p$. Since $R_{1,n}$ is a principal ideal ring, the image of $\phi_1$ is generated by some $g(x) \in \Z_p[x]$ with $g(x)|(x^n - 1)$. Hence, we have  $C_2 = <g(x) + u_2 p(x), u_2 a_1(x)> $ for some $p(x) \in \Z_p[x]$. We have
$$ \phi_1(\dfrac{x^n - 1}{g(x)}(g(x) + u_2 p(x)) = \phi_1(u_2 p(x) \dfrac{x^n - 1}{g(x)}) = 0.$$
Therefore, $u_2 p(x)\dfrac{x^n - 1}{g(x)} \in \text{ker} \phi_1 = < u_2 a_1(x)>$. Hence, $a_1(x)| p(x)\dfrac{x^n - 1}{g(x)}$. Also we have $u_2 g(x) \in \text{ker}\phi_1.$ This implies that $ a_{1}(x) | g(x)$. 

\begin{lemma} \label{lemma-c2}
Let $C_2$ be a cyclic code over $R_2 = \Z_p + u \Z_p, u^2 = 0$. If $C_2 = <g(x) + u p(x), u a_1(x)> $, and $g(x) = a_1(x)$ with ${\rm deg}~g(x) = r$, then 
$$ C_2 = <g(x) + u p(x)> ~\text{and}~ (g(x) + u p(x))|(x^n - 1) ~\text{in}~ R_2.$$ 
\end{lemma}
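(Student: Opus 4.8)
The plan is to prove the two assertions in turn, each resting on a fact already extracted in the paragraph preceding the statement. Both arguments are short computations that exploit $u^2 = 0$ together with the hypothesis $g(x) = a_1(x)$.

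First I would show that the generator $u a_1(x)$ is redundant. Multiplying the first generator by $u$ and using $u^2 = 0$ gives
$$ u(g(x) + u p(x)) = u g(x) = u a_1(x), $$
where the last equality is exactly the hypothesis $g(x) = a_1(x)$. Hence $u a_1(x) \in \langle g(x) + u p(x)\rangle$, so the second generator lies in the ideal generated by the first, and therefore $C_2 = \langle g(x) + u p(x), u a_1(x)\rangle = \langle g(x) + u p(x)\rangle$.

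For the divisibility I would write $\xn = g(x) h(x)$ in $\Z_p[x]$, which is legitimate because $g(x)\,|\,(\xn)$ mod $p$. The decisive input is the relation derived just before the statement, namely $a_1(x)\,\big|\,p(x)\frac{\xn}{g(x)} = p(x) h(x)$; under the hypothesis $a_1(x) = g(x)$ this collapses to $g(x)\,|\,p(x) h(x)$. I would then set $q(x) = -\,p(x)h(x)/g(x)$, which is a genuine element of $\Z_p[x]$ precisely because of that divisibility, and verify in $R_2[x]$ that
$$ (g(x) + u p(x))(h(x) + u q(x)) = g(x)h(x) + u\bigl(g(x)q(x) + p(x)h(x)\bigr) = (\xn) + u\cdot 0 = \xn, $$
using $u^2 = 0$ and $g(x)q(x) = -p(x)h(x)$. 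This exhibits $g(x) + u p(x)$ as a divisor of $\xn$ in $R_2$.

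There is no genuine obstacle here beyond correct bookkeeping: the entire content is that the hypothesis $g = a_1$ turns the previously established relation $a_1\,|\,p\cdot\frac{\xn}{g}$ into $g\,|\,p h$, which simultaneously makes the second generator a $u$-multiple of the first and makes $q(x)$ a polynomial. The only point to watch is to invoke that earlier relation in the exact form in which it was derived and to keep track of the vanishing of the $u^2$-term.
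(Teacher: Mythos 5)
Your proof is correct, but its second half takes a genuinely different route from the paper's. The first step --- $u(g(x)+up(x)) = ug(x) = ua_1(x)$, so the second generator is redundant --- is exactly the paper's argument. For the divisibility, though, the paper does not build a cofactor: it divides $x^n-1$ by $g(x)+up(x)$ via the division algorithm in $R_2$, notes that the remainder $r(x)$ lies in $C_2$ and has degree less than $r = \deg g$, and concludes $r(x)=0$; this implicitly uses that $C_2$ contains no nonzero polynomial of degree below $\deg g$, a minimal-degree claim the paper leaves unjustified (one would verify it by applying $\phi_1$ to the remainder and using $\ker\phi_1 = \langle u a_1(x)\rangle$ together with the degree bounds). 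You instead invoke the relation $a_1(x) \mid p(x)\frac{x^n-1}{g(x)}$ derived in the paragraph preceding the lemma, which under $a_1 = g$ gives $g \mid p h$ with $h = (x^n-1)/g$, and then exhibit the explicit cofactor $h(x)+uq(x)$, $q = -ph/g$; the identity $(g+up)(h+uq) = gh + u(gq+ph) = x^n-1$ is a one-line check using $u^2=0$, and your appeal to that earlier relation is legitimate since it holds in $\Z_p[x]$ (as $a_1 \mid x^n-1$, membership in $\langle u a_1(x)\rangle$ modulo $x^n-1$ does lift to honest divisibility). Your route buys rigor and explicitness: it sidesteps the minimal-degree gap in the paper's remainder argument and produces the cofactor concretely. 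The paper's route buys uniformity: the division-algorithm argument transfers verbatim to the analogous statements over $R_3$ and $R_4$ (Lemmas \ref{lemma-c3} and \ref{lemma-c4}, whose proofs the paper declares ``similar''), whereas your explicit construction would there require solving for several $u$-components rather than one.
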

\begin{proof}
We have $ u (g(x) + u p(x)) = u g(x)$ and $g(x) = a(x)$. It is clear that $ C_2 \subset < g(x) + u p(x)>.$ Hence, $C_2 = <g(x) + u p(x)>$. By the division algorithm, we have
$$ x^n - 1 = (g(x) + u p(x)) q(x) + r(x), ~~~~ \text{where}~ r(x) = 0~ \text{or deg}~r(x) < r.$$
Since $r(x) \in C_2$, we have $r(x) = 0$ and hence $(g(x) + u p(x))|(x^n - 1) ~\text{in}~ R_2$.
\end{proof}
Note that the image of $\phi_2$ is an ideal in $R_{2,n}$, hence a cyclic code over $R_2$. Therefore, we have $\text{Im}(\phi_2) = <g(x) + u_2 p_1(x), u_2 a_1(x)>$ with $a_1(x) | g(x) | (x^n - 1)$ and $a_1(x) | p_1(x)(\dfrac{x^n - 1}{g(x)}).$ Also, we have $\text{ker}\phi_2 = <u_{3}^2 a_2(x)>$ with $a_2(x)|(x^n - 1)$ mod $p$ and $u_{3}^2 a_1(x) \in \text{ker} \phi_2$. As above, the cyclic code $C_3$ over $R_3$ is given by
$$ C_3 = < g + u_{3} p_1(x) + u_{3}^2 p_{2}(x), u_3 a_1(x) + u_{3}^2 q_1(x), u_{3}^2 a_2(x)> $$
with $a_2(x) | a_1(x) | g(x) | (x^n - 1)$, $a_1(x) | p_1(x)(\dfrac{x^n - 1}{g(x)})$ mod $p$, $ a_2(x) | q_1(x) (\dfrac{x^n - 1}{a_1(x)})$, $a_2(x)|p_1(x) \dfrac{x^n - 1}{g(x)}$ and $ a_2(x)|p_2(x) (\dfrac{x^n - 1}{g(x)})(\dfrac{x^n - 1}{a_1(x)})$. We may assume that deg $p_2(x) < \text{deg} a_2(x)$, $ \text{deg} q_1(x) < \text{deg} a_2(x)$ and deg $p_1(x) < \text{deg} a_1(x)$ because g.c.d.$(a, b)$ = g.c.d.$( a, b + da)$ for any $d$.  We have the following lemma.
\begin{lemma} \label{lemma-c3}
 Let $C_3$ be a cyclic code over $ R_3 = \Z_p + u \Z_p + u^2 \Z_p, u^3 = 0$. If $ C_3 = < g + u p_1(x) + u^2 p_{2}(x), u a_1(x) + u^2 q_1(x), u^2 a_2(x)>$, and $ a_2(x) = g(x)$, then $ C_3 = < g + u p_1(x) + u^2 p_{2}(x)>$ and $(g + u p_1(x) + u^2 p_{2}(x)) | (x^n -1)$ in $R_3$.
\end{lemma}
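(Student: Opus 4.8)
The plan is to follow the division-algorithm strategy of Lemma \ref{lemma-c2}, but to replace its transparent ``generator-swallowing'' step (which works cleanly only in the $R_2$ case) by a minimal-degree argument for the whole code. Write $G := g + u p_1(x) + u^2 p_2(x)$ and $r := \deg g$; by the normalization recorded just before the lemma we have $\deg p_1,\deg p_2 < r$, so $G$ has $x$-degree $r$ with leading coefficient $1$, a unit in $R_3$. First I would unpack the hypothesis $a_2 = g$: since $a_2(x)\mid a_1(x)\mid g(x)$ are all monic divisors of $x^n-1$ mod $p$, the chain collapses to $a_1(x)=a_2(x)=g(x)$. The second and third generators then read $u g + u^2 q_1$ and $u^2 g$, and since $u^2 G = u^2 g$ (as $u^3=0$) the third generator already lies in $\langle G\rangle$; the inclusion $\langle G\rangle\subseteq C_3$ is clear.

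The heart of the proof is the claim that every nonzero element of $C_3$ has $x$-degree at least $r$. Suppose $c\in C_3$ has $\deg_x c < r$ and write $c = c^{(0)} + u c^{(1)} + u^2 c^{(2)}$ with each $c^{(i)}\in\Z_p[x]$ of degree $<r$. The composite $\Pi := \phi_1\circ\phi_2 : C_3 \to R_{1,n}$ sends the three generators to $g,0,0$, so $\Pi(C_3)=\langle g\rangle$; since $\Pi(c)=c^{(0)}$ has degree $<r$, we get $c^{(0)}=0$. Now $\phi_2(c)=u\,c^{(1)}$ in $R_{2,n}$, and $\phi_1(\phi_2(c))=\Pi(c)=0$, so $\phi_2(c)\in\ker\phi_1 = \langle u a_1\rangle = \langle u g\rangle = u\langle g\rangle$; comparing $u$-components and using $\deg c^{(1)}<r$ forces $c^{(1)}=0$. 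Finally $c=u^2 c^{(2)}\in\ker\phi_2 = \langle u^2 a_2\rangle = u^2\langle g\rangle$, and $\deg c^{(2)}<r$ gives $c^{(2)}=0$. Hence $c=0$, proving the claim.

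With this minimal-degree property in hand, both conclusions drop out of the division algorithm exactly as in Lemma \ref{lemma-c2}. For the first, take any $c\in C_3$ of $x$-degree $<n$ and divide by the unit-leading-coefficient polynomial $G$ in $R_3[x]$: $c = G\,q(x) + \rho(x)$ with $\rho=0$ or $\deg\rho<r$. Then $\rho = c - Gq \in C_3$, and the claim forces $\rho=0$, so $c\in\langle G\rangle$; hence $C_3=\langle G\rangle$. For the second, divide $x^n-1 = G\,q(x) + \rho(x)$ in $R_3[x]$ with $\deg\rho<r$; reducing modulo $x^n-1$ gives $\rho\equiv -Gq$, which lies in $C_3$, so again $\rho=0$ and $(g + u p_1 + u^2 p_2)\mid(x^n-1)$ in $R_3$.

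The step I expect to be the main obstacle is the minimal-degree claim, precisely because it lets one ignore the apparent mismatch $u^2(q_1-p_1)$ between the second generator and $uG$. A direct attempt to show $ug+u^2q_1\in\langle G\rangle$ by hand stalls here: when $\gcd\!\big(g,(x^n-1)/g\big)\neq 1$ one cannot conclude $q_1=p_1$, and this is exactly the gap that makes the corresponding arguments in \cite{Tah-Siap07, Ash-Ham11} unclear. Routing the entire code through $\phi_2$ and $\phi_1$ and invoking the kernel descriptions $\ker\phi_2=\langle u^2 a_2\rangle$ and $\ker\phi_1=\langle u a_1\rangle$ bypasses the difficulty cleanly, at the modest cost of tracking degrees under each projection.
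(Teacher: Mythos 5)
Your proof is correct, and while it ends with the same division-algorithm finish, its key mechanism differs from the paper's. The paper's (very terse) proof runs the induction through the tower: after collapsing $a_1(x)=a_2(x)=g(x)$, it invokes Lemma \ref{lemma-c2} on $\mathrm{Im}(\phi_2)=\langle g+u_2p_1(x),\,u_2a_1(x)\rangle$ to conclude that the middle generator $ua_1(x)+u^2q_1(x)$ is redundant modulo $\ker\phi_2=\langle u^2a_2(x)\rangle$, giving $C_3=\langle G,\,u^2a_2(x)\rangle$ with $G=g+up_1(x)+u^2p_2(x)$; then $u^2a_2(x)=u^2g(x)=u^2G$ absorbs the last generator, and ``the rest of the proof is similar to Lemma \ref{lemma-c2}'' means dividing $x^n-1$ by $G$ and asserting that the remainder, being in $C_3$ of degree $<r$, vanishes --- a step the paper leaves implicit, just as in Lemma \ref{lemma-c2}. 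You instead prove that degree-minimality statement outright: pushing a low-degree codeword through $\Pi=\phi_1\circ\phi_2$ and the kernel descriptions $\ker\phi_1=\langle ua_1(x)\rangle$ and $\ker\phi_2=\langle u^2a_2(x)\rangle$ (all reducing to $u$-powers times $\langle g\rangle$ after the collapse), you kill $c^{(0)},c^{(1)},c^{(2)}$ in turn; this tacitly uses that a nonzero element of $\langle g\rangle\subseteq\Z_p[x]/(x^n-1)$ has canonical representative of degree at least $\deg g$, which is legitimate precisely because $g(x)\mid x^n-1$ bmod $p$ and is the same fact the paper relies on throughout. With the claim in hand, a single division by the unit-leading-coefficient polynomial $G$ does everything at once: it absorbs the middle generator automatically (in fact forcing $q_1=p_1$ under the stated degree normalizations, so you never need the problematic divisibility analysis you rightly flag), proves $C_3=\langle G\rangle$, and yields $G\mid(x^n-1)$ in $R_3$. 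What each approach buys: the paper's route is shorter and reuses the $R_2$ case as the template that propagates up the tower (Lemma \ref{lemma-c4} is proved the same way), while yours is self-contained and makes rigorous exactly the remainder-vanishing step that the paper's appeal to Lemma \ref{lemma-c2} glosses over, at the modest cost of a component-by-component degree count.
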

\begin{proof}
Since $ a_2(x) = g(x)$, we have $ a_1(x) = a_2(x) = g(x)$. From Lemma \ref{lemma-c2}, we get $(g(x) + u p(x) | (\xn)$ in $R_2$, and $ C_3 = < g + u p_1(x) + u^2 p_{2}(x), u^2 a_2(x)>$. The rest of the proof is similar to Lemma \ref{lemma-c2}.
\end{proof}
\begin{lemma} \label{lemma-c3-main}
Let $C_3$ be a cyclic code over $ R_3 = \Z_p + u \Z_p + u^2 \Z_p, u^3 = 0$. If $n$ is relatively prime to $p$, then $ C_3 = <g(x), u a_1(x), u^2 a_2(x)>$ = $<g(x) + u a_1(x) + u^2 a_2(x)> $ over $R_3$.
\end{lemma}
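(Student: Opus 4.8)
The plan is to establish the two asserted equalities separately, starting from the general form already derived in the excerpt, namely $C_3 = \langle g + up_1(x) + u^2 p_2(x),\ ua_1(x) + u^2 q_1(x),\ u^2 a_2(x)\rangle$ with $a_2 \mid a_1 \mid g \mid (x^n-1)$ and the stated degree bounds $\deg p_1 < \deg a_1$, $\deg q_1 < \deg a_2$, $\deg p_2 < \deg a_2$. The first observation I would record is that $\gcd(n,p)=1$ forces $x^n-1$ to be separable modulo $p$: since $\gcd(x^n-1,\ nx^{n-1}) = 1$, the polynomial $x^n-1$ is a product of distinct irreducible factors over $\Z_p$. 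Consequently any divisor of $x^n-1$ is coprime to its complementary cofactor, and, by CRT, $R_{3,n}$ splits as a finite product $\prod_i R_3[x]/\langle f_i\rangle$ of local chain rings, each with maximal ideal $\langle u\rangle$ and with the totally ordered chain of ideals $0 \subset \langle u^2\rangle \subset \langle u\rangle \subset (1)$. These two facts drive the whole argument.

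For the first equality $C_3 = \langle g, ua_1, u^2 a_2\rangle$ I would show the cross terms are forced to vanish. Combining $a_1 \mid p_1(\tfrac{x^n-1}{g})$ with the fact that $a_1\mid g$ makes $a_1$ coprime to $\tfrac{x^n-1}{g}$ (squarefreeness), I get $a_1 \mid p_1$; the bound $\deg p_1 < \deg a_1$ then gives $p_1 = 0$. The same coprimality-plus-degree argument applied to $a_2 \mid q_1(\tfrac{x^n-1}{a_1})$ and to $a_2 \mid p_2(\tfrac{x^n-1}{g})(\tfrac{x^n-1}{a_1})$ yields $q_1 = 0$ and $p_2 = 0$, since $a_2\mid a_1\mid g$ makes $a_2$ coprime to both cofactors. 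Hence the generators collapse to $\langle g, ua_1, u^2 a_2\rangle$.

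For the second equality, the inclusion $\langle g+ua_1+u^2a_2\rangle \subseteq \langle g, ua_1, u^2 a_2\rangle$ is immediate, so the content is to recover $g$, $ua_1$, $u^2 a_2$ individually from $h := g+ua_1+u^2a_2$. Here I would pass to the CRT decomposition and argue component by component: writing $g, a_1, a_2$ as products of the $f_i$ dividing them, each component $R_3[x]/\langle f_i\rangle$ falls into one of four cases according to which of $g, a_1, a_2$ vanish there, and in every case the image of $h$ equals $u^{j}\cdot(\mathrm{unit})$ for the same exponent $j\in\{0,1,2,3\}$ that the triple $\{g, ua_1, u^2 a_2\}$ produces. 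Since the component ring is a chain ring, a single element $u^{j}(\mathrm{unit})$ generates exactly the ideal $\langle u^{j}\rangle$ generated by the triple, so the two ideals agree in each component and therefore globally. Equivalently, one can make this explicit using the idempotents coming from B\'ezout relations $\alpha g + \beta \tfrac{x^n-1}{g}=1$ to project out and reconstruct each generator.

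I expect the second equality to be the main obstacle: part one is a routine coprimality-and-degree computation, whereas collapsing three generators to one genuinely needs the semisimplicity coming from $\gcd(n,p)=1$, since in each local component the ideals must be totally ordered for a single $u^{j}(\mathrm{unit})$ to capture all three pieces at once. Without separability the components could fail to be chain rings and the single-generator conclusion would break down, which is exactly why the hypothesis on $n$ is indispensable.
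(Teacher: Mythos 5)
Your proof of the first equality coincides with the paper's: squarefreeness of $x^n-1$ modulo $p$ gives the coprimality of $a_1(x)$ with $\frac{x^n-1}{g(x)}$ and of $a_2(x)$ with both cofactors, and the degree bounds then force $p_1 = q_1 = p_2 = 0$; nothing to add there. For the second equality you take a genuinely different route. The paper never leaves $R_{3,n}$: writing $h(x) = g(x) + ua_1(x) + u^2a_2(x)$, it multiplies $h(x)$ by $u^2$, by $\frac{x^n-1}{a_1(x)}$ and by $u\frac{x^n-1}{g(x)}$, then uses the B\'ezout identities $1 = f_1(x)\frac{x^n-1}{g(x)} + f_2(x)g(x)$ and $1 = m_1(x)\frac{x^n-1}{a_1(x)} + m_2(x)a_1(x)$ to strip the cofactors and obtain $u^2a_1(x),\, u^2a_2(x) \in \langle h(x)\rangle$, and finally --- the one non-obvious trick --- squares $g(x)+ua_1(x)$ to extract $ug(x)^2$, hence $ug(x) = uf_2(x)g(x)^2 \in \langle h(x)\rangle$, hence $ua_1(x)$ and then $g(x)$. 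You instead invoke CRT: separability of $x^n-1$ splits $R_{3,n} \cong \prod_i R_3[x]/\langle f_i\rangle$ into chain rings with ideal chain $0 \subset \langle u^2\rangle \subset \langle u\rangle \subset (1)$, and since $a_2 \mid a_1 \mid g$, the four divisibility patterns of each $f_i$ make the image of $h$ equal to $u^j\cdot(\mathrm{unit})$ with exactly the exponent $j$ that the triple $\{g, ua_1, u^2a_2\}$ produces, so the two ideals agree component by component. Both arguments are correct. Yours is more conceptual and scales to $R_k$ for arbitrary $k$ with no additional computation, whereas the paper must redo its manipulations for $C_4$ (Lemma \ref{lemma-c4-main}) and then appeal to induction; yours also isolates precisely where the hypothesis $\gcd(n,p)=1$ enters. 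The paper's computation, in exchange, is elementary and self-contained, using no structure theory of $R_{3,n}$. One step you should make explicit rather than implicit: comaximality of $\langle f_i\rangle$ and $\langle f_j\rangle$ in $R_3[x]$ lifts from $\Z_p[x]$ because a relation $s(x)f_i(x) + t(x)f_j(x) = 1$ over $\Z_p$ is verbatim a relation over $R_3$ --- minor, but it is the point where CRT is earned.
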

\begin{proof}
Since $n$ is relatively prime to $p$, the polynomial $ x^n - 1$ factors uniquely into a product of distinct irreducible polynomials. This gives, 
$$ \text{g.c.d.}(a_1(x), \dfrac{(x^n - 1)}{g(x)}) = \text{g.c.d.}( a_2(x), \dfrac{(\xn)}{a_1(x)}) = \text{g.c.d.}( a_2(x), \dfrac{(\xn)}{g(x)}) = 1.$$
Since $a_1(x) | p_1(x)(\dfrac{x^n - 1}{g(x)})$, we get $a_1(x) | p_1(x)$. But $\text{deg} p_1(x) < \text{deg} a_1(x),$ hence $p_1(x) = 0.$ We have $a_2(x) | q_1(x)(\dfrac{x^n - 1}{a_1(x)})$ and $a_2(x) | p_2(x)(\dfrac{x^n - 1}{g(x)}) (\dfrac{x^n - 1}{a_1(x)})$, this gives $a_2(x) | q_1(x)$ and $a_2(x) | p_2(x)$. But $\text{deg} q_1(x) < \text{deg} a_2(x)$ and $\text{deg} p_2(x) < \text{deg} a_2(x),$ hence $p_2(x) = q_1(x) = 0.$ So, $C_3 = <g(x), u a_1(x), u^2 a_2(x)>$. Let $h(x) = g(x) + u a_1(x) + u^2 a_2(x).$ Then
$$ u^2 h(x) = u^2 g(x),~\dfrac{\xn}{a_1(x)}h(x) = \dfrac{\xn}{a_1(x)}u^2 a_2(x),~\text{and}$$
$$u\dfrac{\xn}{g(x)}h(x) = \dfrac{\xn}{g(x)}u^2 a_1(x) \in <h(x)>.$$
Since $n$ is relatively prime to $p$, we have
$$ \text{g.c.d.}\left(g(x), \dfrac{(x^n - 1)}{g(x)}\right) = \text{g.c.d.}\left(a_1(x), \dfrac{(x^n - 1)}{a_(x)}\right) = 1.$$
Hence, $ 1 = f_1(x) \dfrac{(x^n - 1)}{g(x)} + f_2(x) g(x)$, for some polynomial $f_1(x)$ and $f_2(x)$, and $ 1 = m_1(x) \dfrac{(x^n - 1)}{a_1(x)} + m_2(x) a_1(x)$, for some polynomial $m_1(x)$ and $m_2(x)$. Therefore, $ u^2 a_1(x) = u^2 a_1(x)f_1(x) \dfrac{(x^n - 1)}{g(x)} + u^2 a_1(x)f_2(x) g(x) \in <h(x)>$, $ u^2 a_2(x) =u^2 a_2(x) m_1(x) \dfrac{(x^n - 1)}{a_1(x)} + u^2 a_2(x)m_2(x) a_1(x) \in <h(x)>$  and hence $g(x) + ua_1(x) \in < h(x) >.$ We have $ (g(x) + ua_1(x))^2 = g(x)^2 + 2ug(x)a_1(x) + u^2 a_1(x)^2.$ Since $ u^2 a_1(x)^2 \in <h(x)>$, we have $ g(x)^2 + 2ua_1(x)g(x) \in <h(x)>$ and $ug(x)^2 + 2u^2a_1(x)g(x) \in <h(x)>$. So, $u g(x)^2 \in <h(x)>$. We have $ u g(x) = u f_2(x) g(x)^2.$ Hence, $u g(x) \in <h(x)>$. We have $$ \dfrac{\xn}{g(x)}h(x) = \dfrac{\xn}{g(x)}u a_1(x) + \dfrac{\xn}{g(x)}u^2 a_2(x).$$ Since $u^2 a_2(x) \in <h(x)>,$ this gives $ \dfrac{\xn}{g(x)}u a_1(x) \in <h(x)>$. We also have
$$ u a_1(x) = f_1(x) \dfrac{(x^n - 1)}{g(x)} u a_1(x) + f_2(x) ug(x) a_1(x). $$
This gives, $u a_1(x) \in <h(x)>$ and hence $ g(x) \in <h(x)>$. Therefore, $ C_3$ = $<g(x), u a_1(x)$, $u^2 a_2(x)>$ = $<g(x) + u a_1(x) + u^2 a_2(x)> $.
\end{proof}
Note that the image of $\phi_3$ is an ideal in $R_{3,n}$, hence a cyclic code over $R_3$. Therefore, we have $\text{Im}(\phi_3) = <g + u_{3} p_1(x) + u_{3}^2 p_{2}(x), u_3 a_1(x) + u_{3}^2 q_1(x), u_{3}^2 a_2(x)>$ with $a_2(x) | a_1(x) | g(x) | (x^n - 1)$ and $a_1(x) | p_1(x)\left(\frac{x^n - 1}{g(x)}\right),$ $a_2(x) | q_1(x)\left(\frac{x^n - 1}{a_2(x)}\right)$ and $a_1(x) | p_2(x)\left(\frac{x^n - 1}{g(x)}\right)\left(\frac{x^n - 1}{a_1(x)}\right).$ Also, we have $\text{ker}\phi_3 = <u_{4}^3 a_3(x)>$ with $a_3(x)|(x^n - 1)$ mod $p$ and $u_{4}^3 a_2(x) \in \text{ker} \phi_3$. As above, the cyclic code $C_4$ over $R_4$ is given by
$ C_4 = < g + u_{4} p_1(x) + u_{4}^2 p_{2}(x) + u_{4}^3 p_3(x), u_4 a_1(x) + u_{4}^2 q_1(x) + u_{4}^3 q_2(x), u_{4}^2 a_2(x) + u_{4}^3 l_{1}(x), u_{4}^3 a_{3}(x)> $
with $a_{3}(x) | a_2(x) | a_1(x) | g(x) | (x^n - 1)$, $a_1(x) | p_1(x)\left(\frac{x^n - 1}{g(x)}\right)$ mod $p$, $ a_2(x) | q_1(x) \left(\frac{x^n - 1}{a_1(x)}\right)$, $a_2(x)|p_1(x) \left(\frac{x^n - 1}{g(x)}\right)$, $ a_2(x)|p_2(x) \left(\frac{x^n - 1}{g(x)}\right)\left(\frac{x^n - 1}{a_1(x)}\right)$,\\ $ a_3(x) | l_1(x) \left(\frac{x^n - 1}{a_2(x)}\right)$, $ a_3(x) | q_2(x) \left(\frac{x^n - 1}{q_1(x)}\right) \left(\frac{x^n - 1}{a_1(x)}\right)$ and $ a_3(x) | p_3(x) \left(\frac{x^n - 1}{g(x)}\right) \left(\frac{x^n - 1}{a_2(x)}\right) \times \left(\frac{x^n - 1}{a_1(x)}\right). $ We may assume that $ \text{deg} p_3(x) < \text{deg} a_3(x)$, $ \text{deg} q_2(x) < \text{deg} a_3(x)$, $ \text{deg} l_1(x) < \text{deg} a_3(x)$, deg $p_2(x) < \text{deg} a_2(x)$, $ \text{deg} q_1(x) < \text{deg} a_2(x)$ and deg $p_1(x) < \text{deg} a_1(x)$ because g.c.d.$(a, b)$ = g.c.d.$( a, b + da)$ for any $d$.  We have the following lemma.
\begin{lemma} \label{lemma-c4}
 Let $C_4$ be a cyclic code over $ R_4 = \Z_p + u \Z_p + u^2 \Z_p + u^3 \Z_p, u^4 = 0$. If $C_4 = < g + u p_1(x) + u^2 p_{2}(x) + u^3 p_3(x), u a_1(x) + u^2 q_1(x) + u^3 q_2(x), u^2 a_2(x) + u^3 l_{1}(x), u^3 a_{3}(x)>$, and $ a_3(x) = g(x)$, then $ C_4 = < g + u p_1(x) + u^2 p_{2}(x) + u^3 p_3(x)>$ and $(g + u p_1(x) + u^2 p_{2}(x) + u^3 p_3(x)) | (x^n -1)$ in $R_4$.
\end{lemma}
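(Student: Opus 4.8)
The plan is to run the two-step collapse from the proof of Lemma~\ref{lemma-c3} one layer higher. Throughout I write $h(x)=g+up_1(x)+u^2p_2(x)+u^3p_3(x)$ for the first generator and $r=\deg g$. First I would use the hypothesis $a_3(x)=g(x)$ together with the divisibility chain $a_3(x)\mid a_2(x)\mid a_1(x)\mid g(x)\mid(x^n-1)$: all four are monic divisors of $x^n-1$, each divides the next, and $a_3=g$ pins their degrees equal, so $a_1(x)=a_2(x)=a_3(x)=g(x)$.

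The main step is to apply Lemma~\ref{lemma-c3} to the image $\mathrm{Im}(\phi_3)\subseteq R_{3,n}$. Here $\mathrm{Im}(\phi_3)=\langle g+up_1+u^2p_2,\;ua_1+u^2q_1,\;u^2a_2\rangle$ and $\ker\phi_3=\langle u^3a_3\rangle$. Since $a_2=g$ holds in $R_3$, Lemma~\ref{lemma-c3} collapses $\mathrm{Im}(\phi_3)$ to $\langle g+up_1+u^2p_2\rangle$, with $(g+up_1+u^2p_2)\mid(x^n-1)$ in $R_3$; in particular its second and third generators are absorbed by the first. I would then lift this through $\phi_3$, using $\phi_3(\langle h\rangle)=\mathrm{Im}(\phi_3)$: the generators $ua_1+u^2q_1+u^3q_2$ and $u^2a_2+u^3l_1$ of $C_4$ map under $\phi_3$ into $\langle g+up_1+u^2p_2\rangle$, so each lies in $\langle h(x)\rangle+\ker\phi_3=\langle h(x),\,u^3a_3(x)\rangle$. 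Hence the middle two generators are redundant and $C_4=\langle h(x),\,u^3a_3(x)\rangle$.

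It remains to discard the last generator and prove divisibility. Using $a_3=g$ and $u^4=0$ in $R_4$ we get $u^3h(x)=u^3g(x)=u^3a_3(x)$, so $u^3a_3\in\langle h(x)\rangle$ and therefore $C_4=\langle h(x)\rangle$. Finally, the degree bounds $\deg p_i<\deg a_i\le r$ make $h(x)$ monic of degree $r$, so the division algorithm yields $x^n-1=h(x)q(x)+r(x)$ with $r(x)=0$ or $\deg r(x)<r$; since $x^n-1=0$ in $R_{4,n}$ we have $r(x)\in C_4=\langle h(x)\rangle$, and, exactly as in Lemma~\ref{lemma-c2} (the least degree of a nonzero element of $C_4$ being $r$, which one sees by pushing onto $\langle g\rangle$ through the reduction maps), this forces $r(x)=0$. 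Thus $h(x)\mid(x^n-1)$ in $R_4$.

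The step I expect to be the real obstacle is the lifting in the second paragraph: one must verify that absorbing the second and third generators of $\mathrm{Im}(\phi_3)$ genuinely lifts to absorption inside $C_4$, i.e. that the single kernel generator $u^3a_3$ accounts for all the discrepancy carried by the extra terms $u^3q_2$ and $u^3l_1$. This is precisely the point the authors flag as being handled unclearly in \cite{Tah-Siap07, Ash-Ham11}, so it warrants an explicit check rather than an appeal to analogy.
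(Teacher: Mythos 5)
Your proposal is correct and takes essentially the same route as the paper: force $a_1(x)=a_2(x)=a_3(x)=g(x)$, apply Lemma \ref{lemma-c3} to $\mathrm{Im}(\phi_3)$, lift through $\ker\phi_3=\langle u^3a_3(x)\rangle$ to obtain $C_4=\langle h(x),u^3a_3(x)\rangle$, absorb $u^3a_3(x)=u^3g(x)=u^3h(x)$, and finish with the division algorithm as in Lemma \ref{lemma-c2}. You merely make explicit the lifting step that the paper compresses into ``the rest of the proof is similar to Lemma \ref{lemma-c3},'' which is an expansion of, not a departure from, the paper's argument.
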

\begin{proof}
Since $a_3(x) = g(x)$, we have $ a_1(x) = a_2(x) = a_3(x) = g(x)$. From Lemma \ref{lemma-c3}, we get $(g(x) + u p_1(x) +u^2 p_2(x)) | (\xn)$ in $R_3$, and $ C_4 = <g + u p_1(x) + u^2 p_{2}(x) + u^3 p_3(x), u a_1(x) + u^2 q_1(x) + u^3 q_2(x), u^3 a_{3}(x)>$. The rest of the proof is similar to Lemma \ref{lemma-c3}.
\end{proof}
\begin{lemma} \label{lemma-c4-main}
Let $C_4$ be a cyclic code over $ R_4 = \Z_p + u \Z_p + u^2 \Z_p + u^3 \Z_p, u^4 = 0$. If $n$ is relatively prime to $p$, then $ C_4 = <g(x), u a_1(x), u^2 a_2(x), u^3 a_3(x)>$ = $<g(x) + u a_1(x) + u^2 a_2(x) + u^3 a_3(x)> $ over $R_4$.
\end{lemma}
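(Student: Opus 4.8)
The plan is to imitate the proof of Lemma~\ref{lemma-c3-main} one layer higher: first collapse every ``off-diagonal'' polynomial $p_1,p_2,p_3,q_1,q_2,l_1$ to zero, and then show that the single element $h(x)=g(x)+ua_1(x)+u^2a_2(x)+u^3a_3(x)$ generates the whole code.

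First I would exploit that $n$ is coprime to $p$, so that $x^n-1$ factors into distinct irreducibles over $\Z_p$. Since $a_3(x)\mid a_2(x)\mid a_1(x)\mid g(x)\mid(x^n-1)$, squarefreeness forces each of $\gcd(a_1,\frac{x^n-1}{g})$, $\gcd(a_2,\frac{x^n-1}{a_1})$, $\gcd(a_3,\frac{x^n-1}{a_2})$ and the analogous mixed products to equal $1$. Feeding these into the divisibility relations listed just before the statement — e.g.\ $a_1\mid p_1\frac{x^n-1}{g}$ gives $a_1\mid p_1$, and likewise for $p_2,p_3,q_1,q_2,l_1$ — together with the degree bounds $\deg p_1<\deg a_1$, $\deg q_1,\deg p_2<\deg a_2$ and $\deg p_3,\deg q_2,\deg l_1<\deg a_3$, forces each of these polynomials to vanish. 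This reduces $C_4$ to $\langle g, ua_1, u^2a_2, u^3a_3\rangle$.

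The second and main part is to prove $\langle h\rangle=\langle g, ua_1, u^2a_2, u^3a_3\rangle$; the inclusion $\langle h\rangle\subseteq C_4$ is obvious, so I would strip the generator from the top down. Multiplying $h$ by $u^3$, by $u^2\frac{x^n-1}{g}$, by $u\frac{x^n-1}{a_1}$ and by $\frac{x^n-1}{a_2}$, and using that $g\cdot\frac{x^n-1}{g}$, $a_1\cdot\frac{x^n-1}{a_1}$ and $a_2\cdot\frac{x^n-1}{a_2}$ all vanish modulo $x^n-1$, isolates the elements $u^3g$, $u^3\frac{x^n-1}{g}a_1$, $u^3\frac{x^n-1}{a_1}a_2$ and $u^3\frac{x^n-1}{a_2}a_3$ inside $\langle h\rangle$. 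A short B\'ezout argument with the coprimality relations above then upgrades these successively to $u^3a_1,u^3a_2,u^3a_3\in\langle h\rangle$, whence $g+ua_1+u^2a_2=h-u^3a_3\in\langle h\rangle$. Running the same multiply-and-B\'ezout routine one level down on $g+ua_1+u^2a_2$ produces $u^2a_1,u^2a_2\in\langle h\rangle$ and hence $g+ua_1\in\langle h\rangle$, and finally the squaring trick from Lemma~\ref{lemma-c3-main} — forming $(g+ua_1)^2$ to extract $ug^2$, then $ug$, then $ua_1$ — yields $g\in\langle h\rangle$. Collecting $g,ua_1,u^2a_2,u^3a_3\in\langle h\rangle$ gives the reverse inclusion.

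I expect the real obstacle to be the cross terms that were identically zero in the $R_3$ computation but survive here. For instance, multiplying $h$ by $u\frac{x^n-1}{g}$ now leaves an extra summand $u^3\frac{x^n-1}{g}a_2$ that is \emph{not} annihilated by $u^4=0$; the argument closes only because this summand is already known to lie in $\langle h\rangle$ once $u^3a_2\in\langle h\rangle$ has been secured. Keeping the strict order — first obtain all the $u^3a_j$, then the $u^2a_j$, then descend to the $u$-level — is therefore essential, and this bookkeeping of higher-order cross terms is precisely the ``slight modification'' flagged in the introduction relative to the $R_2$ and $R_3$ cases.
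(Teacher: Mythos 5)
Your proposal is correct and follows essentially the same route as the paper's proof: the same gcd-plus-degree argument forces $p_1=p_2=p_3=q_1=q_2=l_1=0$, and the same top-down stripping (multiplying $h$ by $u^3$, $u^2\frac{x^n-1}{g}$, $u\frac{x^n-1}{a_1}$, $\frac{x^n-1}{a_2}$, then B\'ezout) yields $u^3a_1,u^3a_2,u^3a_3\in\langle h\rangle$, after which one descends to the $u^2$-level and finally extracts $ua_1$ and $g$ exactly as the paper does, including the handling of the surviving cross terms such as $u^3\frac{x^n-1}{g}a_2$. The only cosmetic difference is how $u^2g\in\langle h\rangle$ is obtained: your routine gets it directly from $u^2(g+ua_1+u^2a_2)=u^2g+u^3a_1$, whereas the paper squares $g+ua_1+u^2a_2$ and argues via $u^2g^2$; both are valid, and yours is marginally simpler.
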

\begin{proof}
The proof is similar to Lemma \ref{lemma-c3-main}. Since $n$ is relatively prime to $p$, the polynomial $ x^n - 1$ factors uniquely into a product of distinct irreducible polynomials. This gives,
\begin{center}
$\text{g.c.d.}\left(a_1(x), \frac{(x^n - 1)}{g(x)}\right) = \text{g.c.d.}\left( a_2(x), \frac{(\xn)}{a_1(x)}\right) = \text{g.c.d.}\left( a_2(x), \frac{(\xn)}{g(x)}\right) = 1,$
\end{center}
\begin{center}
 $\text{g.c.d.}\left( a_3(x), \frac{(\xn)}{a_2(x)}\right) = \text{g.c.d.}\left( a_3(x), \frac{(\xn)}{a_1(x)}\right) = \text{g.c.d.}\left( a_3(x), \frac{(\xn)}{g(x)}\right) = 1$.
\end{center}
Since $a_1(x) | p_1(x)\left(\frac{x^n - 1}{g(x)}\right)$, we get $a_1(x) | p_1(x)$. But $\text{deg} p_1(x) < \text{deg} a_1(x),$ hence $p_1(x) = 0.$ We have $a_2(x) | q_1(x)\left(\frac{x^n - 1}{a_1(x)}\right)$ and $a_2(x) | p_2(x)\left(\frac{x^n - 1}{g(x)}\right) \left(\frac{x^n - 1}{a_1(x)}\right)$, this gives $a_2(x) | q_1(x)$ and $a_2(x) | p_2(x)$. But $\text{deg} q_1(x) < \text{deg} a_2(x)$ and $\text{deg} p_2(x) < \text{deg} a_2(x),$ hence $p_2(x) = q_1(x) = 0.$ Similarly, $ p_3(x) = q_2(x) = l_1(x) = 0.$ So, $C_4 = <g(x), u a_1(x), u^2 a_2(x), u^3 a_3(x)>$. Let $h(x) = g(x) + u a_1(x) + u^2 a_2(x) + u^3 a_3(x).$ Then
$$ u^3 h(x) = u^3 g(x),~\dfrac{\xn}{a_2(x)}h(x) = \dfrac{\xn}{a_2(x)}u^3 a_3(x),$$
$$u\dfrac{\xn}{a_1(x)}h(x) = \dfrac{\xn}{a_1(x)}u^3 a_2(x) ~\text{and}~ u^2\dfrac{\xn}{g(x)}h(x) = \dfrac{\xn}{g(x)}u^3 a_1(x) \in <h(x)>.$$
Since $n$ is relatively prime to $p$, we have
\begin{center} $\text{g.c.d.}\left(g(x), \frac{(x^n - 1)}{g(x)}\right) = \text{g.c.d.}\left(a_1(x), \frac{(x^n - 1)}{a_1(x)}\right) = \text{g.c.d.}\left(a_2(x), \frac{(x^n - 1)}{a_2(x)}\right) = 1.$\end{center}
Hence, $ 1 = f_1(x) \frac{(x^n - 1)}{g(x)} + f_2(x) g(x)$, for some polynomials $f_1(x)$ and $f_2(x)$,  $ 1 = m_1(x) \frac{(x^n - 1)}{a_1(x)} + m_2(x) a_1(x)$, for some polynomials $m_1(x)$ and $m_2(x)$ and $ 1 = n_1(x) \frac{(x^n - 1)}{a_2(x)} + n_2(x) a_2(x)$, for some polynomials $n_1(x)$ and $n_2(x)$. Therefore,
\begin{center}
$ u^3 a_1(x) = u^3 a_1(x)f_1(x) \frac{(x^n - 1)}{g(x)} + u^3 a_1(x)f_2(x) g(x) \in <h(x)>$,\\
$ u^3 a_2(x) =u^3 a_2(x) m_1(x) \frac{(x^n - 1)}{a_1(x)} + u^3 a_2(x)m_2(x) a_1(x) \in <h(x)>$ and
$ u^3 a_3(x) =u^3 a_3(x) n_1(x) \frac{(x^n - 1)}{a_2(x)} + u^3 a_3(x)n_2(x) a_2(x) \in <h(x)>.$
\end{center}
Hence, $g(x) + ua_1(x) + u^2a_2(x)\in < h(x)>.$ We have $ (g(x) + ua_1(x) + u^2a_2(x))^2 = g(x)^2 + u^2 a_1(x)^2 + 2ug(x)a_1(x) +  2u^2g(x)a_2(x) + 2u^3a_1(x)a_2(x).$ Since $ u^3 a_2(x)\in <h(x)>$, we have $ g(x)^2 + u^2 a_1(x)^2 + 2ua_1(x)g(x) + 2u^2g(x)a_2(x)$  $\in <h(x)>$ and hence $u^2g(x)^2  \in <h(x)>$. We have $ u^2 g(x) = u^2 f_2(x) g(x)^2.$ Hence, $u^2 g(x) \in <h(x)>$. We have $$ \frac{\xn}{a_1(x)}h(x) = \frac{\xn}{a_1(x)}u^2 a_2(x) + \frac{\xn}{a_1(x)}u^3 a_3(x) ~\text{and}$$
$$u\frac{\xn}{g(x)}h(x) = \frac{\xn}{g(x)}u^2 a_1(x) + \frac{\xn}{g(x)}u^3 a_2(x).$$
This gives, $\dfrac{\xn}{g(x)}u^2 a_1(x) \in <h(x)>$ and $\dfrac{\xn}{a_1(x)}u^2 a_2(x) \in <h(x)>.$
We have
$$ u^2 a_1(x) = f_1(x) \dfrac{(x^n - 1)}{g(x)} u^2 a_1(x) + f_2(x) u^2g(x) a_1(x). $$
Therefore, $u^2 a_1(x) \in <h(x)>$. We also have
$$ u^2 a_2(x) = m_1(x) \frac{(x^n - 1)}{a_1(x)}u^2 a_2(x) + u^2m_2(x) a_1(x)a_2(x).$$
Therefore, $ u^2 a_2(x) \in <h(x)>.$ Hence, $ g(x) + u a_1(x) \in <h(x)>$. The rest of the proof is similar to Lemma \ref{lemma-c3-main}, but for readers convenience we repeat the proof here.
We have $ (g(x) + ua_1(x))^2 = g(x)^2 + 2ug(x)a_1(x) + u^2 a_1(x)^2.$ Since $ u^2 a_1(x) \in <h(x)>$, we have $ g(x)^2 + 2ua_1(x)g(x) \in <h(x)>$ and $ug(x)^2 + 2u^2a_1(x)g(x) \in <h(x)>$. So, $u g(x)^2 \in <h(x)>$. We have $ u g(x) = u f_2(x) g(x)^2.$ Hence, $u g(x) \in <h(x)>$. We have $$ \dfrac{\xn}{g(x)}h(x) = \dfrac{\xn}{g(x)}u a_1(x) + \dfrac{\xn}{g(x)}u^2 a_2(x) + \frac{\xn}{g(x)}u^3 a_3(x).$$ Since $u^2 a_2(x), u^3a_3(x) \in <h(x)>,$ this gives $ \dfrac{\xn}{g(x)}u a_1(x) \in <h(x)>$. We also have
$$ u a_1(x) = f_1(x) \dfrac{(x^n - 1)}{g(x)} u a_1(x) + f_2(x) ug(x) a_1(x). $$
This gives, $u a_1(x) \in <h(x)>$ and hence $ g(x) \in <h(x)>$. Therefore, $ C_4$ = $<g(x), u a_1(x)$, $u^2 a_2(x), u^3 a_3(x)>$ = $<g(x) + u a_1(x) + u^2 a_2(x) + u^3 a_3(x)>. $ This proves the lemma.
\end{proof}
Following the same process as above and by induction on $k$, we get the following theorem.
\begin{theorem} \label{gen-main}
Let $C_k$ be a cyclic code over $ R_k = \Z_p + u \Z_p + u^2 \Z_p + \cdots + + u^{k-1} \Z_p, u^k = 0$.
\begin{enumerate} [{\rm (1)}]
\item If $n$ is relatively prime to $p$, then we have $ C_k = <g(x), u a_1(x), u^2 a_2(x), \dots,\\ u^{k-1} a_{k-1}(x)>$ = $<g(x) + u a_1(x) + u^2 a_2(x) + \dots + u^{k-1} a_{k-1}(x)> $ over $R_k$.
\item If $n$ is not relatively prime to $p$, then
\begin{enumerate}[{\rm ($a$)}]
\item $C_k = < g(x) + u p_1(x) + u^2 p_{2}(x) + \cdots + u^{k-1} p_{k-1}(x)>$ where $g(x)$ and $p_i(x)$ are polynomials in $\Z_p[x]$ for each $i =1,2, \dots,k-1$ with $g(x)|(\xn)$ mod $p$, $(g(x) + u p_1(x) + u^2 p_{2}(x) + \cdots + u^{k-1} p_{k-1}(x))|(\xn)$ in $R_k$ and $ {\rm deg} p_i < {\rm deg} p_{i-1}$ for all $ 1 \leq i \leq k.$ {\rm Or}
\item $C_k = < g(x) + u p_1(x) + u^2 p_{2}(x) + \cdots + u^{k-1} p_{k-1}(x), u^{k-1}a_{k-1}(x)>$ where $a_{k-1}(x)|g(x)|(\xn)$ mod $p$, $g(x) + u p(x)|(x^n - 1)$ in $R_2$, $g(x)|p_1(x)\left(\frac{x^n-1}{g(x)}\right)$ and $a_{k-1}(x)|p_1(x)\left(\frac{x^n-1}{g(x)}\right)$, $a_{k-1}(x)|p_2(x)\left(\frac{x^n-1}{g(x)}\right)\left(\frac{x^n-1}{g(x)}\right),\dots, \\a_{k-1}(x)|p_{k-1}(x) \tiny{\underbrace{\left(\frac{x^n-1}{g(x)}\right) \cdots \left(\frac{x^n-1}{g(x)}\right)}_{k-1~{\rm times}}}$ and ${\rm deg}p_{k-1}(x) < {\rm deg}a_{k-1}(x)$. {\rm Or}
\item $C_k = < g + u p_1(x) + u^2 p_{2}(x) + \cdots + u^{k-1} p_{k-1}(x), u a_1(x) + u^2 q_1(x) + \cdots + u^{k-1} q_{k-2}(x), u^2 a_2(x) + u^3 l_{1}(x) + \cdots + u^{k-1}l_{k-3}(x), \dots, u^{k-2} a_{k-2}(x) + u^{k-1} t_1(x), u^{k-1} a_{k-1}(x)>$ with $a_{k-1}(x)|a_{k-2}(x)| \cdots | a_2(x)|a_1(x)|g(x)|(\xn)$ mod $p$ $ a_{k-2}(x)|p_1(x)\left(\frac{\xn}{g(x)}\right), \dots, a_{k-1}|t_1(x)\left(\frac{\xn}{a_{k-2}(x)}\right)$, $\dots$, $a_{k-1}|p_{k-1} \times\left(\frac{\xn}{g(x)}\right)\cdots\left(\frac{\xn}{a_{k-2}(x)}\right)$. Moreover, ${\rm deg} p_{k-1}(x) < {\rm deg}a_{k-1}(x), \dots, {\rm deg} t_1(x) < a_{k-1}(x), \dots,$ and $ {\rm deg} p_1(x) < {\rm deg}a_{k-2}(x)$. 
\end{enumerate}
\end{enumerate}
\end{theorem}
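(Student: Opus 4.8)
The plan is to proceed by induction on $k$, using the chain of reduction homomorphisms $\phi_{k-1}, \phi_{k-2}, \dots, \phi_1$ introduced above, whose behaviour has already been analysed completely for $k = 2, 3, 4$ in Lemmas \ref{lemma-c2}--\ref{lemma-c4-main}. The base cases $k = 2, 3, 4$ are precisely those lemmas, so I may assume the statement for $C_{k-1}$ over $R_{k-1}$ and deduce it for $C_k$ over $R_k$. The essential tool is the homomorphism $\phi_{k-1} : C_k \to R_{k-1,n}$: its image is a cyclic code over $R_{k-1}$, to which the inductive hypothesis applies, and its kernel is $\ker\phi_{k-1} = \langle u^{k-1} a_{k-1}(x)\rangle$ with $a_{k-1}(x) \mid (\xn)$ mod $p$. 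Thus every element of $C_k$ is a lift of a generator of $\mathrm{Im}\,\phi_{k-1}$ modulo the kernel, which is exactly what produces the triangular generating set in part (2)(c).

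For part (2), I would first record the general form. By induction $\mathrm{Im}\,\phi_{k-1}$ has a generating set of the shape in (2)(c) with one fewer layer; lifting each such generator arbitrarily to $C_k$ and adjoining the kernel generator $u^{k-1} a_{k-1}(x)$ yields the stated triangular set. The divisibility relations among $g$, the $a_i$, and the cross terms $p_i, q_i, l_i, \dots, t_1$ are then forced exactly as in the displayed computations preceding Lemmas \ref{lemma-c3} and \ref{lemma-c4}: applying $\phi_{k-1}$ to products of the form $\frac{\xn}{g(x)}(\,\cdot\,)$ and using that the result lies in $\ker\phi_{k-1}$ gives $a_{k-1}(x) \mid p_j(x)(\cdots)$, while the reduction $\deg p_j < \deg a_j$ is legitimate since $\gcd(a,b) = \gcd(a, b+da)$. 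The collapses (2)(a) and (2)(b) are the cases where the lower layers coincide with $g$: when $a_{k-1}(x) = g(x)$ one gets $a_1 = \cdots = a_{k-1} = g$, and Lemmas \ref{lemma-c2}, \ref{lemma-c3}, \ref{lemma-c4} (applied inductively through $R_{k-1}$) collapse the set to the single generator dividing $\xn$ in $R_k$, with (2)(b) the analogous collapse of all but the top kernel layer.

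For part (1), assume $\gcd(n,p) = 1$, so $\xn$ factors into distinct irreducibles over $\Z_p$ and every pairwise $\gcd\!\left(a_i(x), \frac{\xn}{a_j(x)}\right)$ with $i > j$ equals $1$, as used in Lemmas \ref{lemma-c3-main} and \ref{lemma-c4-main}. Each condition $a_i(x) \mid (\text{cross term})\cdot\!\left(\frac{\xn}{\cdots}\right)$ then forces $a_i(x)$ to divide the cross-term polynomial, and the degree bound $\deg < \deg a_i$ makes every $p_j, q_j, l_j, \dots, t_1$ vanish, reducing $C_k$ to $\langle g, ua_1, \dots, u^{k-1}a_{k-1}\rangle$. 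To collapse this to the single generator $h(x) = g + ua_1 + \cdots + u^{k-1}a_{k-1}$, I would run the bootstrap of Lemma \ref{lemma-c4-main} in general: starting from $u^{k-1}h = u^{k-1}g$ together with the B\'ezout identities $1 = f_1 \frac{\xn}{g} + f_2 g$ (and their analogues for each $a_i$), show $u^{k-1}a_i \in \langle h\rangle$ for all $i$; then descend one power of $u$ at a time, at each stage squaring the partial sum $g + ua_1 + \cdots + u^{j}a_j$ to isolate $u^{j}g^2$, using $u^{j}g = u^{j}f_2 g^2$ to recover $u^{j}g$, and then $u^{j}a_i = f_1 \frac{\xn}{g}u^{j}a_i + f_2 u^{j} g\, a_i$ to recover each $u^{j}a_i$. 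This ultimately gives $g \in \langle h\rangle$, so $\langle g, ua_1, \dots, u^{k-1}a_{k-1}\rangle = \langle h\rangle$.

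The main obstacle I anticipate is the bookkeeping in part (1): the descent must be organised so that, at the step recovering terms carrying the factor $u^{j}$, all terms with higher powers of $u$ are already known to lie in $\langle h\rangle$ (so that squaring $h$ and discarding them is valid), and the B\'ezout coefficients for the correct pair $\left(a_i, \frac{\xn}{a_i}\right)$ are invoked at each level. Turning this into a clean induction on the power of $u$, rather than repeating the ad hoc manipulations of Lemma \ref{lemma-c4-main} $k-1$ times, is the real content; by contrast the divisibility relations of part (2), though numerous, follow mechanically from applying $\phi_{k-1}$ and the principal-ideal structure of $R_{1,n}$.
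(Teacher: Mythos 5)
Your proposal takes essentially the same approach as the paper: the authors prove Theorem \ref{gen-main} only by the one-line remark that it follows ``by the same process as above and by induction on $k$,'' with Lemmas \ref{lemma-c2}--\ref{lemma-c4-main} serving as the base cases and the tower of homomorphisms $\phi_{k-1}, \dots, \phi_1$ (image handled by induction, kernel $\langle u^{k-1}a_{k-1}(x)\rangle$ adjoined) driving the step, exactly as you describe. Your sketch of the B\'ezout-plus-squaring descent for part (1) and of the forced divisibility relations for part (2) is in fact more explicit than anything the paper writes down for general $k$.
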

\section{Ranks and minimal spanning sets}
\begin{theorem} \label{rank2}
Let $n$ is not relatively prime to $p$. Let $C_2$ be a cyclic code of length $n$ over $R_2 = \Z_p + u \Z_p, u^2 = 0.$ 
\begin{enumerate} [{\rm (1)}]
 \item If $ C_2 = <g(x) + u p(x)> $ with deg $g(x) = r$ and $(g(x) + u p(x))|(x^n - 1)$, then $C_2$ is a free module with rank $n-r$ and a basis $B_1 = \{ g(x) + u p(x), x(g(x) + u p(x)), \dots, x^{n-r-1}(g(x) + up(x))\},$ and $|C_2| = p^{2n-2r}.$
\item If $ C_2 = <g(x) + u p(x), ua(x)> $ with deg $g(x) = r$ and deg $a(x) =t$, then $C_2$ has rank $n-t$ and a minimal spanning set $B_2 = \{ g(x) + u p(x), x(g(x) + u p(x)), \dots, x^{n-r-1}(g(x) + up(x)), ua(x), xua(x), \dots, x^{r-t-1}ua(x)\},$ and $|C_2|\\ = p^{2n-r-t}.$
\end{enumerate}
\end{theorem}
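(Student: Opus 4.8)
The plan is to treat the two parts in turn, in each case first showing that the listed set spans $C_2$ and then pinning down $|C_2|$; freeness in (1) and minimality in (2) will then drop out of the resulting count. Throughout I use that a polynomial dividing $\xn$ mod $p$ may be taken monic, and that a monic polynomial is a non-zero-divisor in $R_2[x]$, so the division algorithm by it is available.

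For (1), the element $g(x)+up(x)$ is monic of degree $r$, since its leading term is that of the monic $g(x)$. As $(g(x)+up(x))\mid\xn$ in $R_2$, I would write $\xn=(g(x)+up(x))\,h(x)$ with $h(x)\in R_2[x]$ monic of degree $n-r$. Spanning then comes from dividing by the monic $h$: for any $A(x)\in R_2[x]$, writing $A=qh+s$ with $\deg s<n-r$ gives $A\,(g+up)\equiv s\,(g+up)\pmod{\xn}$, so every codeword is an $R_2$-combination of $B_1$. For $R_2$-independence I would assume $s\,(g+up)\equiv0$ with $\deg s<n-r$, deduce $(g+up)h\mid s\,(g+up)$, cancel the monic factor $g+up$ to get $h\mid s$, and conclude $s=0$ from $\deg s<\deg h$. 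Hence $B_1$ is a free basis, the rank is $n-r$, and $|C_2|=|R_2|^{\,n-r}=p^{2(n-r)}$.

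For (2) I would use the standing relations from the construction: $a\mid g$, so $g=ab$ with $b$ monic of degree $r-t$, and $\ker\phi_1=\langle ua\rangle=u\langle a\rangle$. To show $B_2$ spans, take $c\in C_2$ and write $c=A\,(g+up)+B\,ua$. Reducing the non-$u$ part of $A$ modulo $h=\xn/g$ replaces it by a polynomial $s_0$ of degree $<n-r$, so that $c-\sum_i(s_0)_i\,x^i(g+up)$ lies in $C_2$, has vanishing non-$u$ component, and is thus of the form $uW$; since $uW\in\ker\phi_1=u\langle a\rangle$ we get $W\in\langle a\rangle$, say $W=\sum_{j=0}^{n-t-1}w_j\,x^j a$. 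The crux is to realize each $u\,x^j a$ from the $R_2$-span of $B_2$, even though $B_2$ lists only $ua,\dots,x^{r-t-1}ua$. Here I use $u(g+up)=ug$, so $u\,x^m g=u\,x^m(g+up)$ lies in the span for every $m\le n-r-1$; writing $u\,x^m g=u\,a\,x^{m+(r-t)}+(\text{terms }u\,a\,x^i,\ i<m+(r-t))$ and using that $b$ is monic of degree $r-t$, I can solve for $u\,a\,x^{m+(r-t)}$. Starting from the base cases $ua,\dots,x^{r-t-1}ua\in B_2$ and inducting on the exponent, this yields $u\,x^j a\in\langle B_2\rangle$ for all $0\le j\le n-t-1$, hence $uW\in\langle B_2\rangle$ and $c\in\langle B_2\rangle$.

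Finally, for the size and minimality I would prove that each codeword has a unique expression $\sum_{i=0}^{n-r-1}\alpha_i\,x^i(g+up)+\sum_{j=0}^{r-t-1}\beta_j\,x^j ua$ with $\alpha_i\in R_2$ and $\beta_j\in\Z_p$, the restriction $\beta_j\in\Z_p$ being forced since $u\cdot x^j ua=0$. Equating such a combination to $0$ and comparing non-$u$ parts gives $\sum\bar\alpha_i\,x^i g=0$, whence all $\bar\alpha_i=0$ because $\{x^i g\}_{i<n-r}$ is a $\Z_p$-basis of $\langle g\rangle$; writing $\alpha_i=u\tilde\alpha_i$ reduces the relation to $a\,[(\sum\tilde\alpha_i x^i)\,b+\sum\beta_j x^j]\equiv0\pmod{\xn}$, and as the bracketed polynomial has degree $<n-t=\deg(\xn/a)$, cancelling the monic $a$ yields $(\sum\tilde\alpha_i x^i)\,b+\sum\beta_j x^j=0$ in $\Z_p[x]$; since $\deg(\sum\beta_j x^j)<\deg b$, a degree comparison forces all $\tilde\alpha_i=0$ and then all $\beta_j=0$. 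Uniqueness gives a bijection with $R_2^{\,n-r}\times\Z_p^{\,r-t}$, so $|C_2|=p^{2(n-r)}p^{\,r-t}=p^{2n-r-t}$, and exhibits $B_2$ as a spanning set of $n-t$ elements; since $uC_2=u\langle g\rangle$ has $p^{\,n-r}$ elements, $\dim_{\Z_p}(C_2/uC_2)=n-t$, so by Nakayama $n-t$ generators are necessary and $B_2$ is minimal of rank $n-t$. I expect the genuinely delicate step to be the climbing in (2)—passing from the $r-t$ listed $u$-generators to all of $u\langle a\rangle$ via $u(g+up)=ug$ and $g=ab$; the remaining steps are routine degree bookkeeping.
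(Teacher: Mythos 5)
Your proposal is correct, and its skeleton coincides with the paper's: in (1) spanning via division by the monic complementary factor of degree $n-r$, and in (2) reducing an arbitrary codeword to the span of $\{x^i(g+up)\}_{i<n-r}\cup\{x^jua\}_{j<n-t}$ and then climbing down to $B_2$ using exactly the paper's key identities $u(g+up)=ug$ and $a(x)\mid g(x)$. Three local differences are worth recording. First, for independence in (1) the paper compares constant coefficients, using that $g_0+up_0$ is a unit (which holds because $g(x)\mid x^n-1$ forces $g_0\neq 0$), while you cancel the monic non-zero-divisor $g+up$ and finish by a degree count; both are valid, yours avoiding any appeal to the constant term. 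Second, in (2) you make the reduction step explicit through $\ker\phi_1=\langle ua(x)\rangle$, writing the residual codeword as $uW$ with $W\in\langle a(x)\rangle$, whereas the paper simply asserts that $B_2$ need only span the larger set $B$ and matches leading coefficients; your bookkeeping with $g=ab$, $b$ monic of degree $r-t$, is a cleaner version of the same induction. Third, and most substantively, your treatment of minimality and of $|C_2|$ is more rigorous than the paper's: the paper claims $B_2$ is ``linearly independent,'' which cannot hold literally over $R_2$ since $u\cdot ua(x)=0$; your unique normal form with $\alpha_i\in R_2$, $\beta_j\in\Z_p$ yields the count $p^{2(n-r)}\cdot p^{r-t}=p^{2n-r-t}$ honestly, and the Nakayama argument via $\dim_{\Z_p}(C_2/uC_2)=n-t$ actually proves that no generating set with fewer than $n-t$ elements exists — a fact the paper asserts but does not establish. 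So the proposal succeeds, essentially along the paper's route, while repairing its weakest step.
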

\begin{proof}
(1) Suppose $\xn = (g(x) + u p(x))(h(x) + u h_1(x))$ over $R_2$. Let $c(x) \in C_2 = <g(x) + u p(x)> $, then $c(x) = (g(x) + u p(x))f(x)$ for some polynomial $f(x)$. If deg $f(x) \leq n-r-1$, then $c(x)$ can be written as linear combinations of elements of $B_1$. Otherwise by the division algorithm there exist polynomials $q(x)$ and $r(x)$ such that
$$ f(x) = \left(\frac{\xn}{g(x) + u p(x)}\right)q(x) + r(x) ~\text{where}~ r(x) = 0 ~\text{or}~ \text{deg}~r(x) \leq n-r-1. $$
This gives,
\begin{align*}
(g(x) + u p(x))f(x)= & (g(x) + u p(x))\left(\left(\frac{\xn}{g(x) + u p(x)}\right)q(x) + r(x)\right)\\
= & (g(x) + u p(x))r(x).
\end{align*}
Since deg $r(x) \leq n-r-1$, this shows that $B_1$ spans $C_2$. Now we only need to show that $B_1$ is linearly independent. Let $ g(x) = g_0 + g_1x + \cdots + g_rx^r$ and $p(x) = p_0 + p_1x + \cdots + p_lx^l,$ $ g_0 \in \Z_p^{\times}, g_i, p_{i-1} \in \Z_p, i \geq 1$. Suppose
$$(g(x)+up(x))c_0 + x(g(x)+up(x))c_1 + \cdots + x^{n-r-1}(g(x)+up(x))c_{n-r-1} = 0.$$  
By comparing the coefficients in the above equation, we get
$$ (g_0 + u p_0) c_0 = 0. ~\text{(constant coefficient)} $$
Since $(g_0 + u p_0)$ is unit, we get $ c_0 = 0$. Thus,
$$x(g(x)+up(x))c_1 + \cdots + x^{n-r-1}(g(x)+up(x))c_{n-r-1} = 0.$$
Again comparing the coefficients, we get
$$ (g_0 + u p_0) c_1 = 0. ~\text{(coefficient of}~x). $$
As above, this gives $ c_1 = 0$. Continuing in this way we get that $c_i = 0 $ for all $ i = 0, 1. \dots, n-r-1$. Therefore, the set $B_1$ is linearly independent and hence a basis for $C_2$.\\
(2) If $C_2 = <g(x) + u p(x), ua(x)> $ with deg $g(x) = r$ and deg $a(x) =t$. The lowest degree polynomial in $C_2$ is $u a(x)$. It is suffices to show that $B_2$ spans $ B =\{ g(x) + u p(x), x(g(x) + u p(x)), \dots, x^{n-r-1}(g(x) + up(x)), ua(x), xua(x), \dots,\\ x^{n-t-1}ua(x)\}.$ We first show that $u x^{r-t} a(x) \in \text{span}(B_2)$. Let the leading coefficients of $ x^{r-t} a(x)$ be $a_0$ and of $g(x) + u p(x)$ be $g_0$. There exists a constant $c_0 \in \Z_p$ such that $ a_0 = c_0 g_0$. Then we have
$$ u x^{r-t} a(x) =  u c_0(g(x) + u p(x)) + u m(x), $$
where $um(x)$ is a polynomial in $ C_2$ of degree less than $r$. Since $C_2 = <g(x) + u p(x), ua(x)> $, any polynomial in $C_2$ must have degree greater or equal to deg $a(x) = t$. Hence, $ t \leq \text{deg}~m(x) <r$ and
$$ um(x) = \alpha_0 u a(x) + \alpha_1 x u a(x) + \cdots + \alpha_{r-t-1} x^{r-t-1} u a(x).$$
Thus, $u x^{r-t} a(x) \in \text{span}(B_2).$ Inductively, we can show that $u x^{r-t+1} a(x), \dots,\\ ux^{n-t-1}a(x) \in \text{span}(B_2).$ Hence $B_2$ is a generating set. As in (1), by comparing the coefficients we can see that $B_2$ is linearly independent. Therefore, $B_2$ is a minimal spanning set and $|C_2| = p^{2n-r-t}.$ 
\end{proof}

Following the same process as in the above theorem, we can find the rank and the minimal spanning set of any cyclic code over the ring $R_k, k \geq 1$.
\begin{theorem} \label{rankk}
Let $n$ is not relatively prime to $p$. Let $C_k$ be a cyclic code of length $n$ over $R_k = \Z_p + u \Z_p + \cdots + u^{k-1} \Z_p, u^k = 0.$ We assume the constraints on the generator polynomials of $C_k$ as in Theorem \ref{gen-main}.
\begin{enumerate}[{\rm (1)}]
 \item If $ C_k = <g(x) + u p_1(x) + u^2 p_2(x) + \cdots + u^{k-1}p_{k-1}(x)> $ with deg $g(x) = r$, then $C_k$ is a free module with rank $n-r$ and a basis $B_1 = \{ g(x) + u p_1(x) + \cdots + u^{k-1}p_{k-1}(x), x(g(x) + u p_1(x) + \cdots + u^{k-1}p_{k-1}(x)), \dots, x^{n-r-1}(g(x) + u p_1(x) + \cdots + u^{k-1}p_{k-1}(x))\}.$
\item If $ C_k = < g(x) + u p_1(x) + u^2 p_{2}(x) + \cdots + u^{k-1} p_{k-1}(x), u a_1(x) + u^2 q_1(x) + \cdots + u^{k-1} q_{k-2}(x), u^2 a_2(x) + u^3 l_{1}(x) + \cdots + u^{k-1}l_{k-3}(x), \dots, u^{k-2} a_{k-2}(x) + u^{k-1} t_1(x), u^{k-1} a_{k-1}(x)> $ with deg $g(x) = r_1$, deg $a_1(x) =r_2$, deg $a_2(x) =r_3, \dots,$ deg $a_{k-1}(x) =r_k$, then $C_k$ has rank $n-r_k$ and a minimal spanning set  $B_2 = \{ g(x) + u p_1(x) + \cdots + u^{k-1}p_{k-1}(x), x(g(x) + u p_1(x) + \cdots + u^{k-1}p_{k-1}(x)), \dots, x^{n-r_1-1}(g(x) + u p_1(x) + \cdots + u^{k-1}p_{k-1}(x)), u a_1(x) + u^2 q_1(x) + \cdots + u^{k-1} q_{k-2}(x), x(u a_1(x) + u^2 q_1(x) + \cdots + u^{k-1} q_{k-2}(x)), \dots,\\ x^{r_1-r_2-1}(u a_1(x) + u^2 q_1(x) + \cdots + u^{k-1} q_{k-2}(x)), u^2 a_2(x) + u^3 l_{1}(x) + \cdots + u^{k-1}l_{k-3}(x), x(u^2 a_2(x) + u^3 l_{1}(x) + \cdots + u^{k-1}l_{k-3}(x)), \dots, x^{r_2-r_3-1}(u^2 a_2(x) + u^3 l_{1}(x) + \cdots + u^{k-1}l_{k-3}(x)), \dots, u^{k-1}a_{k-1}(x), xu^{k-1}a_{k-1}(x), \dots, x^{r_{k-1}-r_k-1}\\u^{k-1}a_{k-1}(x)\}.$
\item If $C_k = < g(x) + u p_1(x) + u^2 p_{2}(x) + \cdots + u^{k-1} p_{k-1}(x), u^{k-1}a_{k-1}(x)>$ with deg $g(x) = r$ and deg $a_{k-1}(x) = t$, then $C_k$ has rank $n-t$ and a minimal spanning set $B_3 = \{g(x) + u p_1(x) + u^2 p_{2}(x) + \cdots + u^{k-1} p_{k-1}(x), x(g(x) + u p_1(x) + u^2 p_{2}(x) + \cdots + u^{k-1} p_{k-1}(x)), \dots, x^{n-r-1}(g(x) + u p_1(x) + u^2 p_{2}(x) + \cdots + u^{k-1} p_{k-1}(x)), u^{k-1}a_{k-1}(x), xu^{k-1}a_{k-1}(x), \dots,\\ x^{r-t-1}u^{k-1}a_{k-1}(x)\}.$
\end{enumerate}
\end{theorem}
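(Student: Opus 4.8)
The plan is to run each of the three cases through the same two-stage template used in the proof of Theorem~\ref{rank2}: first show that the displayed set generates $C_k$, then establish the appropriate independence so that $n-r$, $n-r_k$, or $n-t$ is genuinely the rank and the listed set is minimal. Throughout I write $G_0 := g(x)+up_1(x)+\cdots+u^{k-1}p_{k-1}(x)$. For part (1), by Theorem~\ref{gen-main}(2)(a) we have $G_0\mid(\xn)$ in $R_k$, so I would factor $\xn = G_0 Q(x)$ with $Q(x)=(\xn)/G_0$ having unit leading coefficient (the leading coefficient of $g$ is nonzero in the field $\Z_p$). For a codeword $c(x)=G_0 f(x)$ with $\deg f\ge n-r$, the division algorithm in $R_k[x]$ gives $f=Q q+\rho$ with $\deg\rho<n-r$, whence $c(x)=G_0 Q q+G_0\rho\equiv G_0\rho$ in $R_{k,n}$, a combination of $B_1$; this is exactly the reduction of Theorem~\ref{rank2}(1). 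Linear independence follows by comparing coefficients from the constant term upward: the constant coefficient of $G_0$ is $g_0+up_{1,0}+\cdots$, a unit of $R_k$ since $g_0\in\Z_p^{\times}$ (as $g\mid\xn$), so a vanishing $R_k$-combination forces each coefficient to be zero. Hence $B_1$ is a free basis of rank $n-r$.

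For part (3) I would mirror Theorem~\ref{rank2}(2). The natural generating set $B$ consists of the shifts $x^iG_0$ for $0\le i\le n-r-1$ together with $x^j u^{k-1}a_{k-1}(x)$ for $0\le j\le n-t-1$, and it suffices to show $B_3$ spans $B$, i.e.\ that $u^{k-1}x^{r-t}a_{k-1},\dots,u^{k-1}x^{n-t-1}a_{k-1}$ already lie in $\mathrm{span}(B_3)$. Matching the leading coefficient of $x^{r-t}a_{k-1}$ against that of $g$ gives $u^{k-1}x^{r-t}a_{k-1}=c_0\,u^{k-1}G_0+u^{k-1}m$ with $\deg m<r$, where $u^{k-1}G_0=u^{k-1}g$ because $u^k=0$. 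The remainder $u^{k-1}m$ is a pure-$u^{k-1}$ element of $C_k$, and since such elements are precisely $\ker\phi_{k-1}=\langle u^{k-1}a_{k-1}\rangle$, we get $m\in\langle a_{k-1}\rangle$ with $t\le\deg m<r$, so $u^{k-1}m=\sum_{j=0}^{r-t-1}\alpha_j x^j u^{k-1}a_{k-1}\in\mathrm{span}(B_3)$; an induction on the shift index then absorbs the remaining shifts. Independence is again by coefficient comparison, reading off the $u$-free part first, so $B_3$ is a minimal spanning set of cardinality $(n-r)+(r-t)=n-t$.

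Part (2) is the general flag case and is where the real work lies. Writing the generators as $G_0,G_1,\dots,G_{k-1}$, with $G_j=u^ja_j(x)+u^{j+1}(\cdots)+\cdots$ (so $a_0:=g$) of leading degree $r_{j+1}$ and with $r_0:=n$, the full generating set $B$ contains all shifts $x^iG_j$ of degree at most $n-1$, while $B_2$ retains only $0\le i\le r_j-r_{j+1}-1$ in the $G_j$-block. I would reduce the excess shifts by the same leading-coefficient cancellation as in part (3), but now the cancellation cascades: to kill the top ($u^j$-level) term of a high shift $x^iG_j$ one subtracts a suitable $u^{j}x^{\,\cdot}$-multiple of a higher generator $G_0,\dots,G_{j-1}$ (together with shifts of $G_j$ inside its own block), and the resulting remainder has strictly smaller degree and is a combination of shifts of the lower generators $G_{j+1},\dots,G_{k-1}$, so the process terminates. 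Summing the surviving shifts gives $\sum_{j=0}^{k-1}(r_j-r_{j+1})=n-r_k$. Minimality then follows from a $u$-adic filtration argument: reducing a vanishing combination modulo $u$ leaves only the $\Z_p$-independent family $x^ig(x)$, forcing the $G_0$-coefficients into $uR_k$, and iterating through the $k$ levels (using $\psi_{k-1},\dots,\psi_1$) shows that no proper subset of $B_2$ can span $C_k$.

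The step I expect to be the main obstacle is precisely this cascade in part (2): verifying that every remainder produced while clearing a high shift of $G_j$ lands back inside $\mathrm{span}(B_2)$ and in the admissible range of the next block. This is where the divisibility chain $a_{k-1}(x)\mid\cdots\mid a_1(x)\mid g(x)\mid(\xn)$ of Theorem~\ref{gen-main} must be used carefully, since it is what forces the remainders to drop one $u$-level at a time with the degree bookkeeping matching the shift counts $r_j-r_{j+1}$; the divisibility conditions on the $p_i,q_i,l_i,t_i$ guarantee no lower-level generator is needed outside its range. Once this termination is in hand, the filtration argument for minimality is routine.
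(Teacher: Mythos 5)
Your proposal is correct and follows essentially the same route as the paper's own proof, which runs all three cases through the two-stage template of Theorem \ref{rank2}: division by the divisor $g(x)+up_1(x)+\cdots+u^{k-1}p_{k-1}(x)$ of $x^n-1$ together with the unit constant coefficient $g_0\in\Z_p^{\times}$ for part (1), and leading-coefficient cancellation to absorb excess shifts into lower blocks for parts (2) and (3), with (3) treated as a special case of (2). If anything you are more explicit than the paper, whose proof of part (2) verifies only the single reduction $u^{k-1}x^{r_{k-1}-r_k}a_{k-1}(x)\in\mathrm{span}(B_2)$ (cancelling against $g(x)+up_1(x)+\cdots$, despite the degree mismatch with $r_{k-1}$) and asserts the rest ``inductively,'' and whose claim that $B_2$ is linearly independent by coefficient comparison glosses over the $u$-torsion that your mod-$u$ filtration argument correctly accommodates by treating $B_2$ as a minimal spanning set rather than a basis.
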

\begin{proof}
(1) The proof is same as in Theorem \ref{rank2}. Suppose $x^n - 1 = $
$$(g(x) + u p_1(x) + \cdots + u^{k-1}p_{k-1}(x)) (h(x) + u h_1(x) + \cdots + u^{k-1}h_{k-1}(x))$$ over $R_k$.  Suppose $\xn = (g(x) + u p(x))(h(x) + u h_1(x))$ over $R_2$. Let $c(x) \in C_k = <g(x) + u p_1(x) + u^2 p_2(x) + \cdots + u^{k-1}p_{k-1}(x)> $, then $c(x) = (g(x) + u p_1(x) + u^2 p_2(x) + \cdots + u^{k-1}p_{k-1}(x))f(x)$ for some polynomial $f(x)$. If deg $f(x) \leq n-r-1$, then $c(x)$ can be written as linear combinations of elements of $B_1$. Otherwise by the division algorithm there exist polynomials $q(x)$ and $r(x)$ such that
$$ f(x) = \left(\frac{\xn}{g(x) + u p_1(x) + \cdots + u^{k-1}p_{k-1}(x)}\right)q(x) + r(x)$$ where $r(x) = 0$ or deg $r(x) \leq n-r-1. $
This gives,
$$(g(x) + u p_1(x) + \cdots + u^{k-1}p_{k-1}(x))f(x) = (g(x) + u p_1(x)  + \cdots + u^{k-1}p_{k-1}(x))r(x).$$
Since deg $r(x) \leq n-r-1$, this shows that $B_1$ spans $C_k$. Now we only need to show that $B_1$ is linearly independent. Let $ g(x) = g_0 + g_1x + \cdots + g_rx^r$ and $p_1(x) = p_{1,0} + p_{1,1}x + \cdots + p_{1,l_1}x^{l_1},$ $ p_2(x) = p_{2,0} + p_{2,1}x + \cdots + p_{1,l_2}x^{l_2}, \dots,$ $p_{k-1}(x) = p_{k-1,0} + p_{k-1,1}x + \cdots + p_{k-1,l_{k-1}}x^{l_{k-1}},$ $g_0 \in \Z_p^{\times}, g_i, p_{j, i-1} \in \Z_p, i,j \geq 1$.\\ Suppose
$(g(x) + u p_1(x) + \cdots + u^{k-1}p_{k-1}(x))c_0 + x(g(x) + u p_1(x) + \cdots + u^{k-1}p_{k-1}(x))c_1 + \cdots + x^{n-r-1}(g(x) + u p_1(x) + \cdots + u^{k-1}p_{k-1}(x))c_{n-r-1} = 0.$  
By comparing the coefficients in the above equation, we get
$$ (g_0 + u p_{1,0} + \cdots + u^{k-1}p_{k-1,0}) c_0 = 0. ~\text{(constant coefficient)} $$
Since $(g_0 + u p_{1,0} + \cdots + u^{k-1}p_{k-1,0})$ is unit, we get $ c_0 = 0$. Thus,
$x(g(x) + u p_1(x) + \cdots + u^{k-1}p_{k-1}(x))c_1 + \cdots + x^{n-r-1}(g(x) + u p_1(x) + \cdots + u^{k-1}p_{k-1}(x))c_{n-r-1} = 0.$
Again comparing the coefficients, we get
$$ (g_0 + u p_{1,0} + \cdots + u^{k-1}p_{k-1,0}) c_1 = 0. ~\text{(coefficient of}~x). $$
As above, this gives $ c_1 = 0$. Continuing in this way we get that $c_i = 0 $ for all $ i = 0, 1. \dots, n-r-1$. Therefore, the set $B_1$ is linearly independent and hence a basis for $C_k$.\\
(2) If $C_k = <g(x) + u p_1(x) + \cdots + u^{k-1} p_{k-1}(x), u a_1(x) + u^2 q_1(x) + \cdots + u^{k-1} q_{k-2}(x), u^2 a_2(x) + u^3 l_{1}(x) + \cdots + u^{k-1}l_{k-3}(x), \dots, u^{k-2} a_{k-2}(x) + u^{k-1} t_1(x), \\u^{k-1} a_{k-1}(x)> $ with deg $(g(x) + u p_1(x) + \cdots + u^{k-1} p_{k-1}(x)) = r_1$, deg $(u a_1(x) + u^2 q_1(x) + \cdots + u^{k-1} q_{k-2}(x)) = r_2$, deg $(u^2 a_2(x) + u^3 l_{1}(x) + \cdots + u^{k-1}l_{k-3}(x)) = r_3, \dots$, and deg $(u^{k-1} a_{k-1}(x)) = r_k $. The lowest degree polynomial in $C_k$ is $u^{k-1} a_{k-1}(x)$. It is suffices to show that $B_2$ spans $ B =\{ g(x) + u p_1(x) + \cdots + u^{k-1}p_{k-1}(x), x(g(x) + u p_1(x) + \cdots + u^{k-1}p_{k-1}(x)), \dots, x^{n-r_1-1}(g(x) + u p_1(x) + \cdots + u^{k-1}p_{k-1}(x)), u a_1(x) + u^2 q_1(x) + \cdots + u^{k-1} q_{k-2}(x), x(u a_1(x) + u^2 q_1(x) + \cdots + u^{k-1} q_{k-2}(x)), \dots, x^{r_1-r_2-1}(u a_1(x) + u^2 q_1(x) + \cdots + u^{k-1} q_{k-2}(x)), u^2 a_2(x) + u^3 l_{1}(x) + \cdots + u^{k-1}l_{k-3}(x), x(u^2 a_2(x) + u^3 l_{1}(x) + \cdots + u^{k-1}l_{k-3}(x)), \dots, x^{r_2-r_3-1}\\(u^2 a_2(x) + u^3 l_{1}(x) + \cdots + u^{k-1}l_{k-3}(x)), \dots, u^{k-1}a_{k-1}(x), xu^{k-1}a_{k-1}(x), \dots, \\x^{n-r_k-1}u^{k-1}a_{k-1}(x)\}.$ As in the proof of part 2 of Theorem \ref{rank2}, it is suffices to show that $u^{k-1} x^{r_{k-1}-r_k} a_{k-1}(x) \in \text{span}(B_2)$. Let the leading coefficients of $ x^{r_{k-1}-r_k} a_{k-1}(x)$ be $a_0$ and of $g(x) + u p_1(x) + \cdots + u^{k-1}p_{k-1}(x)$ be $g_0$. There exists a constant $c_0 \in \Z_p$ such that $ a_0 = c_0 g_0$. Then we have
$$ u^{k-1} x^{r_{k-1}-r_k} a_{k-1}(x) =  u^{k-1} c_0(g(x) + u p_1(x) + \cdots + u^{k-1}p_{k-1}(x)) + u^{k-1} m(x), $$
where $u^{k-1}m(x)$ is a polynomial in $ C_k$ of degree less than $r_{k-1}$. Any polynomial in $C_k$ must have degree greater or equal to deg $(u^{k-1}a_{k-1}(x)) = r_k$. Hence, $ r_k \leq \text{deg}~m(x) < r_{k-1}$ and $u^{k-1}m(x)
 = \alpha_0 u^{k-1}a_{k-1}(x) + \alpha_1 x u^{k-1}a_{k-1}(x) + \cdots + \alpha_{r_{k-1}-r_k-1} x^{r_{k-1}-r_k-1} u^{k-1}a_{k-1}(x).$
Thus, $u^{k-1} x^{r_{k-1}-r_k} a_{k-1}(x) \in \text{span}(B_2).$ Hence $B_2$ is a generating set. As in (1), by comparing the coefficients we can see that $B_2$ is linearly independent. Therefore, $B_2$ is a minimal spanning set.\\
(3) This case is a special case of (2), so the proof is similar to case (2).
\end{proof}

\section{Minimum distance}
Let $n$ is not relatively prime to $p$. Let $C_2 = <g(x)+up(x), ua(x)>$ be a cyclic code of length $n$ over $R_2 = \Z_p + u \Z_p, u^2 = 0.$ We define $C_{2,u} = \{k(x) \in R_{2,n} : u k(x) \in C_2\}.$ It is easy to see that $C_{2,u}$ is a cyclic code over $\Z_p$. Let $ C_k $ be a cyclic code of length $n$ over $R_k = \Z_p + u \Z_p + \cdots + u^{k-1}\Z_p, u^k = 0.$ We define $ C_{k, u^{k-1}} = \{k(x) \in R_{k,n} : u^{k-1} k(x) \in C_k\}.$ Again it is easy to see that $C_{k, u^{k-1}}$ is a cyclic code over $\Z_p$.
\begin{theorem} \label{md1}
Let $n$ is not relatively prime to $p$. If $ C_k = < g(x) + u p_1(x) + u^2 p_{2}(x) + \cdots + u^{k-1} p_{k-1}(x), u a_1(x) + u^2 q_1(x) + \cdots + u^{k-1} q_{k-2}(x), u^2 a_2(x) + u^3 l_{1}(x) + \cdots + u^{k-1}l_{k-3}(x), \dots, u^{k-2} a_{k-2}(x) + u^{k-1} t_1(x), u^{k-1} a_{k-1}(x)> $ is a cyclic code of length $n$ over $R_k = \Z_p + u \Z_p + \cdots + u^{k-1}\Z_p, u^k = 0.$ Then $ C_{k, u^{k-1}} = <a_{k-1}(x)>$ and $w_{H}(C_{k}) = w_{H}(C_{k, u^{k-1}}).$ \end{theorem}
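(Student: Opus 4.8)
The plan is to establish the two assertions of the theorem separately: first pin down the code $C_{k,u^{k-1}}$, and then deduce the weight equality from two opposite inequalities. It is worth noting at the outset that the weight argument will use only the defining property of $C_{k,u^{k-1}}$ together with the fact that $C_k$ is an ideal, not the explicit form of its generator.

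First I would identify $C_{k,u^{k-1}}$. Writing a general element of $R_{k,n}$ as $f = f_0 + u f_1 + \cdots + u^{k-1} f_{k-1}$ with $f_i \in \Z_p[x]/\langle x^n-1\rangle$, the relation $u^k = 0$ gives $u^{k-1} f = u^{k-1} f_0$, so membership in $C_{k,u^{k-1}}$ depends only on $f_0$; this is exactly why $C_{k,u^{k-1}}$ may be viewed as a cyclic code over $\Z_p$, namely $\{f_0 \in \Z_p[x]/\langle x^n-1\rangle : u^{k-1} f_0 \in C_k\}$. The key observation is that, for $f_0 \in \Z_p[x]$, one has $u^{k-1} f_0 \in C_k$ if and only if $u^{k-1} f_0 \in \ker \phi_{k-1}$: the implication $(\Leftarrow)$ is immediate because $\ker\phi_{k-1} \subseteq C_k$, while for $(\Rightarrow)$ one computes $\phi_{k-1}(u^{k-1} f_0) = 0$, since by its very definition $\psi_{k-1}$ drops the top coefficient and hence annihilates the $u^{k-1}$-component of every element. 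Now the elements of $\ker\phi_{k-1} = \langle u^{k-1} a_{k-1}(x)\rangle$ are exactly the products $u^{k-1} a_{k-1}(x) r(x)$ with $r \in \Z_p[x]/\langle x^n-1\rangle$ (higher powers of $u$ being killed); comparing $u^{k-1}$-coefficients then shows $u^{k-1}f_0 \in \ker\phi_{k-1} \iff a_{k-1}(x) \mid f_0(x)$ in $\Z_p[x]/\langle x^n-1\rangle$, i.e. $C_{k,u^{k-1}} = \langle a_{k-1}(x)\rangle$.

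For the weight equality I would prove $w_H(C_k) \le w_H(C_{k,u^{k-1}})$ and $w_H(C_{k,u^{k-1}}) \le w_H(C_k)$. The first is the easy direction: given any nonzero $f \in C_{k,u^{k-1}}$, the element $u^{k-1} f$ lies in $C_k$, is nonzero, and satisfies $w_H(u^{k-1} f) = w_H(f)$, since multiplication by $u^{k-1}$ carries each nonzero $\Z_p$-coefficient to a nonzero element of $R_k$ without disturbing the support. Minimizing over such $f$ yields $w_H(C_k) \le w_H(C_{k,u^{k-1}})$.

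The substantive direction is $w_H(C_{k,u^{k-1}}) \le w_H(C_k)$. Here I would take an arbitrary nonzero codeword $c \in C_k$, write $c = u^j c_j + u^{j+1} c_{j+1} + \cdots + u^{k-1} c_{k-1}$ with $c_j \ne 0$ and $j$ minimal, and multiply by $u^{k-1-j}$: all terms but the lowest vanish because $u^k = 0$, leaving $u^{k-1-j} c = u^{k-1} c_j \in C_k$ (as $C_k$ is an ideal), so $c_j \in C_{k,u^{k-1}}$. The crux is the support comparison: at any coordinate where $c_j$ is nonzero, the $R_k$-coefficient of $c$ has $u^j c_j$ as its lowest nonvanishing term and is therefore itself nonzero, whence $\mathrm{supp}(c_j) \subseteq \mathrm{supp}(c)$ and $w_H(c_j) \le w_H(c)$. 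Thus $w_H(C_{k,u^{k-1}}) \le w_H(c_j) \le w_H(c)$ for every nonzero $c$, and taking the minimum completes the proof. The main obstacle to watch is precisely this support-containment step — verifying that isolating the lowest $u$-component by multiplying through by a power of $u$ can never increase the Hamming weight — together with the routine book-keeping that the weight of $c_j$ computed over $\Z_p$ matches that of the codeword $u^{k-1} c_j$ over $R_k$.
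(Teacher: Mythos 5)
Your proof is correct, but both halves run along genuinely different lines from the paper's, and in each case yours is the more careful version. For $C_{k,u^{k-1}} = \langle a_{k-1}(x)\rangle$ the paper expands $u^{k-1}b(x)$ as a $\Z_p[x]$-combination of $u^{k-1}g(x), u^{k-1}a_1(x), \dots, u^{k-1}a_{k-1}(x)$ and then invokes the chain $a_{k-1}(x) \mid a_{k-2}(x) \mid \cdots \mid g(x)$; you instead observe that $u^{k-1}f_0 \in C_k$ forces $\phi_{k-1}(u^{k-1}f_0) = 0$, so that $C_{k,u^{k-1}}$ is exactly the ideal $J_{k-1}$ of Section 2, which is $\langle a_{k-1}(x)\rangle$ by the very definition of $a_{k-1}(x)$. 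Your reduction to $\ker\phi_{k-1}$ buys rigor: the $u^{k-1}$-component of a general element of $C_k$ also picks up cross terms involving the auxiliary polynomials $p_i, q_i, l_i, t_1$ (a $u^j$-multiple of a generator contributes its $u^{k-1-j}$-coefficient), the stated divisibility conditions do not obviously give $a_{k-1}(x) \mid p_i(x)$ and the like, and the kernel description $\ker\phi_{k-1} = \langle u^{k-1}a_{k-1}(x)\rangle$ absorbs all of this automatically, whereas the paper's displayed combination suppresses it. For the weight equality the paper notes $w_H(u^{k-1}m(x)) \leq w_H(m(x))$ and declares it sufficient to focus on the subcode of $u^{k-1}$-multiples; as written this says nothing about codewords with $u^{k-1}m(x) = 0$, i.e.\ those whose mod-$u$ component vanishes, which is precisely the case your argument handles by taking $j$ minimal with $c_j \neq 0$ and multiplying by $u^{k-1-j}$, so that $u^{k-1-j}c = u^{k-1}c_j \neq 0$, $c_j \in C_{k,u^{k-1}}$, and $\mathrm{supp}(c_j) \subseteq \mathrm{supp}(c)$; together with the easy inequality from $f \mapsto u^{k-1}f$ this closes the gap and yields both bounds cleanly. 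A further small advantage of your route is that it uses only $\ker\phi_{k-1} = \langle u^{k-1}a_{k-1}(x)\rangle$ and the ideal property of $C_k$, not the explicit generator list or the divisibility chain, so it applies verbatim to any cyclic code over $R_k$.
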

\begin{proof}
We have $u^{k-1} a_{k-1}(x) \in C_k$, thus $ <a_{k-1}(x)> \subseteq C_{k, u^{k-1}}.$ If $b(x) \in C_{k,u^{k-1}}$, then $u^{k-1}b(x) \in C_k$ and hence there exist polynomials $ b_1(x), \dots, b_k(x) \in \Z_p[X]$ such that $ u^{k-1}b(x) = b_1(x)u^{k-1}g(x) + b_2(x)u^{k-1}a_1(x) + b_2(x)u^{k-1}a_2(x)+ \cdots + b_k(x)u^{k-1}a_{k-1}(x).$
Since $a_{k-1}(x)|a_{k-2}(x)|\dots |a_{2}(x)|a_{1}(x)|g(x),$ we have $u^{k-1}b(x) = m(x)u^{k-1}a_{k-1}(x)$ for some polynomial $m(x) \in \Z_p[x]$. So, $C_{k, u^{k-1}} \in <a_{k-1}(x)> $, and hence $C_{k, u^{k-1}} = <a_{k-1}(x)>.$ Let $m(x) = m_0(x) + um_1(x) + \cdots + u^{k-1} m_{k-1}(x) \in C_k,$ where $m_0(x), m_1(x), \dots, m_{k-1}(x) \in \Z_p[x].$
We have $u^{k-1}m(x) = u^{k-1}m_0(x),$ $w_{H}(u^{k-1}m(x)) \leq w_{H}(m(x))$ and $u^{k-1}C_k$ is subcode of $C_k$ with $w_{H}(u^{k-1}C_k) \leq w_{H}(C_k)$. Therefore, it is sufficient to focus on the subcode $u^{k-1}C_k$ in order to prove the theorem. Since $u^{k-1}C_k = <u^{k-1}a_{k-1}(x)>$, we get $w_{H}(C_{k}) = w_{H}(C_{k, u^{k-1}}).$
\end{proof}
\begin{definition}
Let $ m = b_{l-1}p^{l-1} + b_{l-2}p^{l-2} + \cdots + b_1p + b_0$, $b_i \in \Z_p, 0 
\leq i \leq l-1$, be the $p$-adic expansion of $m$.
\begin{enumerate} [{\rm (1)}]
 \item If $ b_{l-i}  \neq 0$ for all $1  \leq i \leq q, q < l, $ and $ b_{l-i} = 0 $ for all $i, q+1 \leq i \leq l$, then $m$ is said to have a $p$-adic length $q$ zero expansion.
\item If $ b_{l-i}  \neq 0$ for all $1  \leq i \leq q, q < l, $ $b_{l-q-1} = 0$ and $ b_{l-i} \neq 0 $ for some $i, q+2 \leq i \leq l$, then $m$ is said to have  $p$-adic length $q$ non-zero expansion.
\item If $ b_{l-i}  \neq 0$ for $1  \leq i \leq l, $ then $m$ is said to have a $p$-adic length $l$  expansion or $p$-adic full expansion.
\end{enumerate}
\end{definition}
\begin{lemma} \label{md-lemma}
Let $C$ be a cyclic code over $R_k$ of length $p^l$ where $l$ is a positive integer. Let $C = <a(x)>$ where $a(x) = (x^{p^{l-1}} - 1)^bh(x)$, $ 1 \leq b < p$. If $h(x)$ generates a cyclic code of length $p^{l-1}$ and minimum distance $d$ then $d(C) = (b+1)d$.
\end{lemma}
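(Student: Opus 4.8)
The plan is to prove the two inequalities $d(C)\le (b+1)d$ and $d(C)\ge (b+1)d$ separately, exploiting the fact that $R_k$ has characteristic $p$, so the freshman's dream gives $x^{p^l}-1=(x^{p^{l-1}}-1)^p$, and that for $0\le i\le b<p$ the binomial coefficients $\binom{b}{i}$ are units in $R_k$. Throughout I write $t=p^{l-1}$, so that $(x^{t}-1)^b=\sum_{i=0}^{b}\binom{b}{i}(-1)^{b-i}x^{it}$ has exactly $b+1$ nonzero terms $x^{it}$, which will later occupy $b+1$ distinct length-$t$ blocks of coordinates because $b+1\le p$.

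For the upper bound I would pick a codeword $w(x)$ of minimal weight $d$ in the length-$t$ code $\langle h(x)\rangle$, represented with $\deg w<t$ and $w=hs$. Then $(x^{t}-1)^b w(x)=a(x)s(x)\in C$, and multiplying the expansion above by $w$ produces $b+1$ scalar-unit multiples of $w$, supported in the disjoint blocks $[it,(i+1)t)$ for $i=0,\dots,b$ (disjoint since $\deg w<t$ and $(b+1)t\le pt$, so there is no wraparound modulo $x^{p^l}-1$). Multiplication by a nonzero element of $\Z_p$ preserves Hamming weight, so each block contributes weight exactly $d$; hence $w_H((x^t-1)^bw)=(b+1)d$ and $d(C)\le(b+1)d$.

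For the lower bound I would use a column decomposition. Writing $y=x^{t}$, the ring $R_{k}[x]/(x^{p^l}-1)$ is free over $S=R_k[y]/(y^{p}-1)$ with basis $1,x,\dots,x^{t-1}$, so each $c(x)\in C$ is $c(x)=\sum_{r=0}^{t-1}x^{r}C_r(y)$ with $C_r\in S$ and $w_H(c)=\sum_r w_H(C_r)$. Since $(x^t-1)^b=(y-1)^b$ lies in $S$ and the decomposition is $S$-linear, every column satisfies $C_r=(y-1)^bQ_r(y)$, i.e.\ $C_r$ is a codeword of the length-$p$ cyclic code $\langle(y-1)^b\rangle$ over $R_k$, whose minimum distance is $b+1$ (this is the base case $l=1$, equivalently the classical repeated-root fact). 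Thus each nonzero column has weight $\ge b+1$. To recover the factor $d$, I would collapse $y\mapsto1$: the ring map $\rho\colon R_k[x]/(x^{p^l}-1)\to R_k[x]/(x^{t}-1)$ sends $hf$ to $W=\rho(h)\rho(f)\in\langle h\rangle_t$, and, writing $z=y-1$, the $r$-th coordinate of $W$ equals the $z^{b}$-coefficient of $C_r=z^bQ_r$; hence $C_r\ne0$ whenever that coordinate is nonzero. Consequently at least $w_H(W)\ge d$ columns are nonzero, yielding $w_H(c)\ge(b+1)d$.

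The step I expect to be the main obstacle is the degenerate case $W=0$, which occurs precisely when $c$ is divisible by a strictly higher power $(x^{t}-1)^{b+1}$ than the generator $a(x)$ requires. There the projection $\rho(hf)$ carries no information, and one must instead pass to the true $(x^{t}-1)$-adic order $s\ge b$ of $c$ and argue that the corresponding higher-order projection still lands in $\langle h\rangle_t$ and still sees at least $d$ nonzero columns. Making this rigorous over the non-field chain ring $R_k$, where divisibility must be read through the filtration by powers of $u$, is the delicate point, and it is exactly here that the precise divisibility constraints on the generator polynomials recorded in Theorem~\ref{gen-main} have to be invoked. The auxiliary fact that $\langle(y-1)^b\rangle$ over $R_k$ has minimum distance $b+1$ I would settle either by reduction modulo $u$ combined with the chain-ring structure, or as the base step of an induction on $l$.
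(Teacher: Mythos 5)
Your upper-bound argument is correct, and it is essentially all that the paper's own proof establishes: the paper's one-line proof asserts the weight additivity $w(c)=\sum_{i=0}^{b}w\bigl(\binom{b}{i}x^{it}h(x)m(x)\bigr)$, $t=p^{l-1}$, for \emph{every} multiplier $m(x)$, which is legitimate only when $h(x)m(x)$ is represented by a vector supported in a single length-$t$ block; so the paper really proves $d(C)\le(b+1)d$ and merely asserts the reverse inequality. Your column decomposition is the honest way to attack the lower bound, and the steps you completed are sound: the freeness of $R_k[x]/\langle x^{p^l}-1\rangle$ over $S$, the fact that every column lies in $\langle(y-1)^b\rangle$, the minimum distance $b+1$ of that length-$p$ code over $R_k$ (your $u$-layer reduction works, since coordinates cannot cancel across $u$-layers), and the count of at least $d$ nonzero columns whenever $W=\rho(hf)\neq0$.

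However, the degenerate case $W=0$ that you flagged is not a closable gap: the lemma is false exactly there, and your suspicion that the higher-order projection need not land in $\langle h\rangle_t$ is precisely what goes wrong. From $\rho(h)\rho(f)=0$ one only gets $\rho(f)$ in the annihilator of $\rho(h)$, so writing $z=x^t-1$ one finds $c=z^{b+1}(m+he)$ with $m$ essentially arbitrary, and the level-$(b+1)$ projection $\rho(m)+\rho(h)\rho(e)$ can have weight $1$. Concretely, take $p=3$, $l=2$, $b=1$, $h(x)=(x-1)^2$: then $h$ generates a length-$3$ code of minimum distance $d=3$, so the lemma claims $d(C)=6$ for $C=\langle(x^3-1)(x-1)^2\rangle=\langle(x-1)^5\rangle$ in $R_k[x]/\langle x^9-1\rangle$; but $(x-1)^5\cdot(x-1)=(x-1)^6=(x^3-1)^2=x^6+x^3+1$ is a codeword of weight $3<6$. (Your decomposition in fact shows $d(C)=3$ here.) The same multiplier $m(x)=x-1$ is the one on which the paper's additivity identity collapses --- the two shifts of $hm=x^3-1$ overlap and cancel, the identity predicting weight $4$ where the true weight is $3$ --- so the defect you anticipated is a genuine error in the paper itself, and it propagates to Theorem~\ref{md-thm} (for instance, for $p=3$, $l=2$, $t_k=5$ that theorem asserts distance $(1+1)(2+1)=6$, while the true distance is $3$). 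What your column argument actually yields is a Castagnoli--Massey-type bound: if $s\ge b$ is the exact $(x^t-1)$-adic level of the codeword $c$ and $d_s$ is the minimum distance of the length-$t$ code formed by the level-$s$ projections, then $w_H(c)\ge(s+1)d_s$, so that $d(C)\ge\min_{s\ge b}(s+1)d_s$; the stated lemma is the special case in which this minimum is attained at $s=b$, which, as the example shows, can fail.
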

\begin{proof}
For $ c \in C$, we have $ c = (x^{p^{l-1}} - 1)^bh(x)m(x)$ for some $ m(x) \in \frac{R_k[x]}{(x^{p^l}-1)}$. Since $h(x)$ generates a cyclic code of length $p^{l-1}$, we have $w(c) = w((x^{p^{l-1}} - 1)^bh(x)m(x)) = w(x^{p^{l-1}b}h(x)m(x)) + w(^bC_1x^{p^{l-1}(b-1)}h(x)m(x)) + \cdots + w(^bC_{b-1}\\x^{p^{l-1}}h(x)m(x)) + w(h(x)m(x))$. Thus, $ d(c) = (b + 1)d$.
\end{proof}

\begin{theorem} \label{md-thm}
Let $C_k$ be a cyclic code over $R_k$ of length $p^l$ where $l$ is a positive integer. Then,  $ C_k = < g(x) + u p_1(x) + u^2 p_{2}(x) + \cdots + u^{k-1} p_{k-1}(x), u a_1(x) + u^2 q_1(x) + \cdots + u^{k-1} q_{k-2}(x), u^2 a_2(x) + u^3 l_{1}(x) + \cdots + u^{k-1}l_{k-3}(x), \dots, u^{k-2} a_{k-2}(x) + u^{k-1} t_1(x), u^{k-1} a_{k-1}(x)> $ where $g(x) = (x-1)^{t_1}, a_1(x) = (x-1)^{t_2}, \dots, a_{k-1}(x) = (x-1)^{t_k}$. for some $ t_1 > t_2 > \cdots > t_k > 0.$ 
\begin{enumerate}[{\rm (1)}]
\item If $t_k \leq p^{l-1},$ then $d(C) = 2$. 
\item If $t_k > p^{l-1},$ let $t_k = b_{l-1}p^{l-1} + b_{l-2}p^{l-2} + \cdots + b_1p + b_0$ be the $p$-adic expansion of $t_k$ and $ a_{k-1}(x) = (x-1)^{t_k} = (x^{p^{l-1}} - 1)^{b_{l-1}}(x^{p^{l-2}} - 1)^{b_{l-2}} \cdots (x^{p^{1}} - 1)^{b_1}(x^{p^0} - 1)^{b_0}$.
\begin{enumerate}[{\rm ($a$)}]
 \item If $t_k$ has a $p$-adic length $q$ zero expansion or full expansion $(l=q)$. Then, $d(C_k) = (b_{l-1}+1)(b_{l-2}+1)\cdots(b_{l-q}+1).$
\item If $t_k$ has a $p$-adic length $q$ non-zero expansion. Then, $d(C_k) = 2(b_{l-1}+1)(b_{l-2}+1)\cdots(b_{l-q}+1)$
\end{enumerate}
\end{enumerate}
\end{theorem}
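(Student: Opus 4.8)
The plan is to reduce the computation over $R_k$ to a single minimum-distance computation over $\Z_p$, and then to peel off the $p$-adic digits of $t_k$ one at a time with Lemma~\ref{md-lemma}. First I would invoke Theorem~\ref{md1}, which gives $C_{k,u^{k-1}} = \langle a_{k-1}(x)\rangle$ and $w_H(C_k) = w_H(C_{k,u^{k-1}})$; hence $d(C_k)$ equals the minimum distance of the $\Z_p$-cyclic code $\langle a_{k-1}(x)\rangle = \langle (x-1)^{t_k}\rangle$ of length $p^l$. This is the key simplification, since it removes $k$ and the auxiliary polynomials $g, p_i, q_i, \dots$ entirely and leaves a pure characteristic-$p$ problem. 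Working in $\Z_p[x]/(x^{p^l}-1)$ with $x^{p^l}-1=(x-1)^{p^l}$, the factorization
$$(x-1)^{t_k} = (x^{p^{l-1}}-1)^{b_{l-1}}(x^{p^{l-2}}-1)^{b_{l-2}}\cdots(x-1)^{b_0}$$
recorded in the statement is valid, because $x^{p^j}-1=(x-1)^{p^j}$ and $\sum_j b_j p^j = t_k$.

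For part (1), suppose $t_k \le p^{l-1}$. Then $(x-1)^{t_k}$ divides $(x-1)^{p^{l-1}} = x^{p^{l-1}}-1$, so the weight-$2$ word $x^{p^{l-1}}-1$ lies in the code and $d(C)\le 2$. Conversely, every codeword, lifted to a representative of degree $<p^l$, is divisible by $(x-1)^{t_k}$ with $t_k\ge 1$ and so vanishes at $x=1$, whereas a monomial $cx^j$ with $c\neq 0$ does not; thus there is no codeword of weight $1$ and $d(C)=2$.

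For part (2), where $t_k>p^{l-1}$ forces $b_{l-1}\ge 1$, I would apply Lemma~\ref{md-lemma} repeatedly. Writing $(x-1)^{t_k}=(x^{p^{l-1}}-1)^{b_{l-1}}h(x)$ with $h(x)=(x-1)^{t_k-b_{l-1}p^{l-1}}$, the residual exponent is formed from the lower digits and is $<p^{l-1}$, so $h(x)$ generates a genuine cyclic code of length $p^{l-1}$, and the lemma gives $d=(b_{l-1}+1)\,d(h)$. Iterating, each nonzero leading digit $b_{l-i}$ contributes a factor $(b_{l-i}+1)$ and drops the length from $p^{l-i+1}$ to $p^{l-i}$. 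In the zero-expansion and full-expansion cases the top $q$ digits are nonzero and everything below vanishes, so after $q$ steps the residual generator is $(x-1)^0=1$, whose length-$p^{l-q}$ code has minimum distance $1$; this yields $d(C_k)=\prod_{i=1}^{q}(b_{l-i}+1)$. In the non-zero-expansion case the top $q$ digits are nonzero, $b_{l-q-1}=0$, and some lower digit is nonzero, so after $q$ steps the residual generator is $(x-1)^s$ of length $p^{l-q}$ with $1\le s<p^{(l-q)-1}$; part (1) applied at length $p^{l-q}$ then gives residual distance $2$, so $d(C_k)=2\prod_{i=1}^{q}(b_{l-i}+1)$.

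The main obstacle is the bookkeeping of this descent rather than any single hard estimate. At every step one must verify that the leading factor is exactly $(x^{p^{L-1}}-1)^{b_{L-1}}$ at the current length $p^{L}$ with $1\le b_{L-1}<p$, so that Lemma~\ref{md-lemma} applies; that the residual factor is a legitimate generator for the shortened length $p^{L-1}$, which is guaranteed because it is built from strictly lower digits; and that the recursion terminates at the correct base case, namely the trivial distance-$1$ code in the zero/full case versus the distance-$2$ code of part (1) in the non-zero case. Making this inductive descent precise, and in particular correctly locating the first vanishing digit that halts it, is where the care is needed.
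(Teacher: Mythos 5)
Your proposal is correct and follows essentially the same route as the paper: reduce to the $\Z_p$-code $\langle (x-1)^{t_k}\rangle$ via Theorem~\ref{md1}, factor through the $p$-adic digits of $t_k$, and apply Lemma~\ref{md-lemma} once per nonzero leading digit, terminating in the trivial distance-$1$ code (zero/full expansion) or the distance-$2$ situation of part (1) (non-zero expansion). The only differences are cosmetic: you peel digits top-down where the paper builds the induction bottom-up from $h(x)$, and you add the (correct, and in fact omitted by the paper) observation that no weight-$1$ codeword exists since every codeword vanishes at $x=1$.
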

\begin{proof}
The first claim easily follows from Theorem \ref{gen-main}. From Theorem \ref{md1}, we see that $d(C_{k}) = d(u^{k-1}C_{k}) = d((x - 1)^{t_k})$. hence, we only need to determine the minimum weight of $u^{k-1}C_{k} = (x - 1)^{t_k}.$\\
(1) If $t_k \leq p^{l-1},$ then $(x - 1)^{t_k}(x - 1)^{p^{l-1}-t_k} = (x - 1)^{p^{l-1}} = (x^{p^{l-1}} - 1) \in C_k$. Thus, $d(C_k) = 2.$\\
(2) Let $ t_k > p^{l-1}$. (a) If $t_k$ has a $p$-adic length $q$ zero expansion, we have $t_k = b_{l-1} p^{l-1} + b_{l-2}p^{l-2} + \cdots + b_{l-q}p^{l-q}$, and $a_{k-1}(x) = (x - 1)^{t_k} = (x^{p^{l-1}}-1)^{b_{l-1}}(x^{p^{l-2}}-1)^{b_{l-2}}\cdots(x^{p^{l-q}}-1)^{b_{l-q}}.$ Let $ h(x) = (x^{p^{l-q}}-1)^{b_{l-q}}.$ Then $h(x)$ generates a cyclic code of length $ p^{l-q+1}$ and minimum distance $ (b_{l-q}+1)$. By Lemma \ref{md-lemma}, the subcode generated by $(x^{p^{l-q+1}}-1)^{b_{l-q+1}}h(x)$ has minimum distance $ (b_{l-q+1}+1) (b_{l-q}+1).$ By induction on $q$, we can see that the code generated by $a_{k-1}(x)$ has minimum distance $(b_{l-1}+1)(b_{l-2}+1)\cdots(b_{l-q}+1).$ Thus, $d(C_k) = (b_{l-1}+1)(b_{l-2}+1)\cdots(b_{l-q}+1).$\\
(b) If $t_k$ has a $p$-adic length $q$ non-zero expansion, we have $t_k = b_{l-1} p^{l-1} + b_{l-2}p^{l-2} + \cdots + b_{1}p + b_0, b_{l-q-1} = 0.$ Let $ r = b_{l-q-2}p^{l-q-2}+ b_{l-q-3}p^{l-q-3}+ \cdots + b_1p + b_0$ and $ h(x) = (x-1)^r = (x^{p^{l-q-2}}-1)^{b_{l-q-2}}(x^{p^{l-q-3}}-1)^{b_{l-q-3}}\cdots(x^{p^{1}}-1)^{b_{1}}(x^{p^{0}}-1)^{b_{0}}.$ Since $ r < p^{l-q-1}$, we have $ p^{l-q-1} = r+j$ for some non-zero $j$. Thus, $ (x-1)^{p^{l-q-1}-j}h(x) = (x^{p^{l-q-1}} - 1) \in C_k.$ Hence, the subcode generated by $h(x)$ has minimum distance 2. By Lemma \ref{md-lemma}, the subcode generated by $ (x^{p^{l-q}} - 1)^{b_{l-q}}h(x)$ has minimum distance $2(b_{l-q}+1)$. By induction on $q$, we can see that the code generated by $a_{k-1}(x)$ has minimum distance $2(b_{l-1}+1)(b_{l-2}+1)\cdots(b_{l-q}+1).$ Thus, $d(C_k) = 2(b_{l-1}+1)(b_{l-2}+1)\cdots(b_{l-q}+1).$\\
\end{proof}

\section{Examples}
\begin{example}
Cyclic codes of length $5$ over $R_4 = \Z_3 + u \Z_3 + u^2 \Z_3 + u^3 \Z_3, u^4 = 0$: We have
$$ x^5-1 = (x-1)(x^4 + x^3 + x^2 + x +1) = g_1g_2 ~\text{over}~ R_4.$$
The non-zero cyclic codes of length $5$ over $R_4$ with generator polynomial are given in Table 1.
\end{example}
\begin{center}
 {\bf Table 1.} Cyclic codes of length 5 over $R_4$.
\begin{tabular}{| c | }
\hline
 Non-zero generator polynomials\\
\hline
$<1>, <g_1>, <g_2>$\\
\hline
$<u>, <ug_1>, <ug_2>$\\
\hline
$<u^2>, <u^2g_1>, <u^2g_2>$\\
\hline
$<u^3>, <u^3g_1>, <u^3g_2>$\\
\hline
$<g_1, u>, <g_2, u>, <g_1, u^2>, <g_2, u^2>, <g_1, u^3>, <g_2, u^3>$\\
\hline
$<ug_1, u^2>, <ug_2, u^2>$\\
\hline
$<u^2g_1, u^3>, <u^2g_2, u^3>.$\\
\hline
\end{tabular}
\end{center}

\bibliographystyle{plain}
\bibliography{ref}

\end{document}